\newcommand*{\Scale}[2][4]{\scalebox{#1}{$#2$}}%
\newtheorem{theorem}{Theorem}
\newtheorem{proposition}[theorem]{Proposition}
\newtheorem{lemma}[theorem]{Lemma}
\newtheorem{remark}[theorem]{Remark}
\numberwithin{theorem}{section}
\DeclareMathOperator*{\argmin}{arg\,min}
\title{
  A (tight) upper bound for the length of confidence intervals with conditional coverage
}
\author{Danijel Kivaranovic \quad Hannes Leeb\\
Department of Statistics and Operations Research \\
University of Vienna
}
\date{}
\begin{document}
\maketitle

\sloppy

\begin{abstract}

We show that two popular selective inference procedures, namely
data carving \citep{fithian2017} and selection with a randomized 
response \citep{tian2018}, when combined with the polyhedral 
method \citep{lee2016},
result in confidence intervals whose length
is bounded. This contrasts results for confidence intervals based on
the polyhedral method alone, whose expected length is typically
infinite  \citep{kivaranovic2020}. Moreover, we show that
these two procedures always dominate
corresponding sample-splitting methods in terms 
of interval length.
\end{abstract}

\section{Introduction}
\label{sec_intro}

Post-model-selection inference, i.e.,
parametric inference when the fitted model is chosen
in a data-driven fashion, is non-trivial. Obviously, such a model
is random and may be misspecified. It is well-known that
the `naive' approach, where the model-selection step is ignored in the
sense that the chosen model is treated as a-priori given and as correct in
subsequent analyses, 
can result in invalid inference procedures; cf. \cite{Lee03a}.
The polyhedral method of \cite{lee2016} is a recently proposed technique
that allows one to construct valid inference procedures, like tests or
confidence intervals, after model selection, for a parameter of interest
that depends on the selected model.
The polyhedral method
and its variants 
can be used with a variety of methods, including the Lasso or
the sequential testing method considered later in this paper.

\cite{kivaranovic2020} showed that the expected length of confidence 
intervals based on the polyhedral method of \cite{lee2016} is typically
infinite. 
The polyhedral method can be modified by combining it with
data carving \citep{fithian2017} or with selection on a randomized
response \citep{tian2018}. These references found,
in simulations, that this combination results in significantly shorter 
intervals than those based on the polyhedral method alone.
In this paper, we give a formal analysis of this phenomenon.
We show that the polyhedral method, when combined with the proposals of
\cite{fithian2017} or of \cite{tian2018}, delivers intervals whose
length is always bounded.  Our upper bound is easy to
compute, easy to interpret 
and also applies in situations where variances are estimated.
In the interesting case where the polyhedral method alone
gives intervals with infinite expected length, our bound is also sharp.
Moreover, we show that the intervals of \cite{fithian2017}
and \cite{tian2018} are always shorter than the intervals
obtained by corresponding sample splitting methods.

There are several ongoing developments regarding
inference after model 
selection. Roughly speaking, one can divide them into two branches: Inference 
conditional on the selected model and inference simultaneous over all potential 
models. Pioneer works in these two areas are \cite{lee2016} and 
\cite{berk2013}, respectively. 
This paper is concerned with procedures in 
the first branch, i.e., procedures that evolved from 
the polyhedral method. See, among others, 
\cite{fithian2017,	
markovic2018, 
panigrahi2018,	
panigrahi2019, 
reid2017,	 
reid2018,
taylor2018,
tian2016,
tian2017,
tian2018b,
tian2018,
tibshirani2016}. For related literature from the second branch,
see among others, 
\cite{bachoc2019,bachoc2020,kuchibhotla2018a,kuchibhotla2018b,zrnic2020}. 
We also want to 
note that all these works address model-dependent targets and not the true 
underlying parameter in the classical sense. Valid inference for the 
underlying 
true parameter is a more challenging task, as demonstrated by the impossibility 
results of \cite{Lee03a,leeb2006,leeb2008}.

This paper is organized as follows. In Section~\ref{sec_overview},
we showcase our results in the context of selective
inference for the file-drawer problem and with the Lasso, and we
describe a particular
conditional distribution that occurs in selective inference 
when the polyhedral method is combined with data carving or
with selection with a randomized response.
In Section~\ref{sec_core} we present our 
main technical result, Theorem~\ref{th_core}, where we show that a confidence 
interval based on the aforementioned conditional distribution 
has bounded length. We also perform simulations that provide additional
insights on the accuracy of our upper bound. 
In Section~\ref{sec_selective}, we demonstrate that the conditional 
distribution
considered in Theorem~\ref{th_core} frequently arises in selective inference.
In particular, we consider the polyhedral method combined with
data carving as well as the polyhedral method
combined with selection on a randomized response.
We show that these procedures give 
confidence intervals whose length is always bounded, and that they
strictly dominate corresponding sample-splitting methods
in terms of interval length.
While the discussion in Section~\ref{sec_selective} is generic,
we also provide detailed examples in Section~\ref{sec_examples}, 
namely a simple case of sequential testing with
data carving and
model selection using the Lasso on a randomized response.
A discussion in Section~\ref{sec_discussion} concludes. 
All proofs are collected in the appendix.

\section{Overview and Motivation}
\label{sec_overview}

Consider first a very simple scenario for
selective inference, the file-drawer problem, which was 
introduced by \cite{Ros79a} as a simple model for publication bias.
Given $Z \sim N(\mu,1)$,
the goal is to construct a confidence interval for $\mu$ provided, e.g., 
that $Z > 1.64$, that is, provided that a test of the hypothesis $H_0: \mu=0$
is rejected.  
The usual equal-tailed 95\%-confidence interval for $\mu$, i.e., 
$Z\pm 1.96$, does not have the right coverage probability
conditional on the event that $Z > 1.64$ because,
conditional on this event, the data is not Gaussian but distributed as
\begin{align}\label{truncated}
	Z ~ | ~ Z \in T
\end{align}
for $T=(1.64,\infty)$, which is a truncated normal.
This distribution depends on unknown parameters only through 
$\mu$, and standard methods give
an equal-tailed  confidence set $[\hat{L}(Z), \hat{U}(Z)]$ that satisfies
$$
\mathbb P( \mu \in [\hat{L}(Z),\hat{U}(Z)] ~|~ Z \in T) \quad=\quad 0.95;
$$
see \citet[Theorem 5.2]{lee2016}.
This interval, 
while having the right coverage probability,
can be rather large in practice:
Because the truncation set $T$ is bounded on one side, 
the results of~\cite{kivaranovic2020} entail that
$$
\mathbb E( \hat{U}(Z) - \hat{L}(Z) ~|~ Z\in T) \quad=\quad \infty.
$$
Now suppose that the conditioning event is randomized as
in Example 2 of~\cite{tian2018}. More precisely, suppose that
the event $\{Z \in T\}$ is replaced by $\{Z + R \in T\}$, where
$R\sim N(0,\tau^2)$ is independent of $Z$ and where $\tau$ is known.
While this may  not be practical when dealing with publication bias,
this is exactly what happens in 
data carving  and selection with a randomized 
response; see Section~\ref{sec_selective}.
Now the underlying distribution becomes
\begin{align}\label{truncatedRandomized}
	Z~|~Z+R\in T,
\end{align}
and standard methods can again be used to construct an equal-tailed confidence
set $[\check{L}(Z), \check{U}(Z)]$ that satisfies
$$
\mathbb P( \mu \in [\check{L}(Z),\check{U}(Z)] ~|~ Z +R \in T) \quad=\quad 0.95.
$$
The main finding of this paper is that this and related
randomization methods have a dramatic impact on confidence interval length. 
Without randomization, the expected length of the interval 
$[\hat{L}(Y), \hat{U}(Y)]$ is infinite. With randomization,
the length of the interval
$[\check{L}(Y), \check{U}(Y)]$ is always bounded:
$$
\check{U}(Z) -\check{L}(Z) <  2\times 1.96 \times \sqrt{1+1/\tau^2};
$$
see Theorem~\ref{th_core}.
The upper bound is just the length of the usual (unconditional) 95\%-confidence
interval $Z\pm 1.96$ multiplied by $\sqrt{1+1/\tau^2}$.

Now consider a more elaborate scenario, model selection with the Lasso.
Consider a response vector $Y$ and a matrix $X$ of explanatory variables.
In particular,
assume that $Y\sim N(\theta, \sigma^2 I_n)$ with $n\in\mathbb N$, 
$\theta \in \mathbb R^n$ and  $\sigma^2 \in (0,\infty)$, and
assume that
$X \in \mathbb R^{n\times d}$ ($d\in\mathbb N$) is a fixed matrix whose columns
are in general position in the sense of \cite{tibshirani2013}.
The Lasso estimator, denoted by $\hat{\beta}(y)$, is the minimizer of the
least squares problem with an additional penalty on the absolute size
of the regression coefficients \citep{frank1993, tibshirani1996}:
$$
\hat{\beta}(y)\quad=\quad \argmin_{\beta \in \mathbb R^d}
	\frac{1}{2}\| y-X \beta\|_2^2 + \lambda \|\beta\|_1,
$$
where $y\in\mathbb R^n$ and where $\lambda \in (0,\infty)$ is a 
given tuning parameter.
Because of our assumptions on $X$, $\hat{\beta}(y)$
is well-defined; cf. Lemma~3 in \cite{tibshirani2013}.
Since individual components of $\hat{\beta}(Y)$ are 
zero with positive probability,
the non-zero coefficients of $\hat{\beta}(Y)$ can be viewed
as the model $\hat{m}(Y)$ selected by the Lasso.
More formally, for each $y\in \mathbb R^n$,
let $\hat{m}(y) \subseteq \{1,\dots, d\}$  and
$\hat{s}(y)\in \{-1,1\}^{|\hat{m}(y)|}$ 
denote the set of  indices and the vector of signs, respectively, of the
non-zero components of $\hat{\beta}(y)$
(in case $\hat{m}(y) = \emptyset$, $\hat{s}(y)$ is left undefined).

Conditional on events like $\{\hat{m}(Y)=m\}$ or 
$\{\hat{m}(Y)=m, \hat{s}(Y)=s\}$,
the polyhedral method provides confidence intervals for
linear contrasts of the form $\eta_m'\theta$ with pre-specified coverage 
probability  for a given $d$-vector $\eta_m \neq 0$.\footnote{
	The quantity of interest $\eta_m'\theta$ may depend on the
	selected model and is often of the form
	$\eta_m'\theta = \gamma_m' [ (X_m'X_m)^{-1} X_m' \theta]$
	if $m\neq\emptyset$.
}
These intervals are based on the (conditional) distribution of 
$\eta_m'Y$, which is the obvious (unconditionally unbiased) estimator
for $\eta_m'\theta$.
Consider a model $m\neq \emptyset$ and a sign-vector 
$s\in \{-1,1\}^{|m|}$, so that $\mathbb P(\hat{m}(Y)=m,\hat{s}(Y)=s)>0$.
\cite{lee2016} show that the event $\{\hat{m}(Y)=m,\hat{s}(Y)=s\}$
is a polyhedron in $Y$-space. Hence, the distribution of 
$Y | \hat{m}(Y)=m, \hat{s}(Y)=s$ is a multivariate Gaussian restricted to said
polyhedron. Now decompose $Y$ into the sum of two independent 
components were one depends only on 
$\eta_m'Y$: $Y = P_{\eta_m}Y + (I_n-P_{\eta_m})Y$ with
$P_{\eta_m}$ denoting the orthogonal projection on the span of $\eta_m$.
Conditional on $\hat{m}(Y)=m$, $\hat{s}(Y)=s$ {\em and} $(I_n-P_{\eta_m})Y=w$,
we see that $Y$ equals $P_{\eta_m}Y + w$ and lies on the affine line 
$\{\alpha \eta_m + w, \alpha \in \mathbb R\}$ intersected 
with the polyhedron corresponding to $\{\hat{m}(Y)=m, \hat{s}(Y)=s\}$.
In particular, the conditional distribution of $\eta_m'Y$
conditional on 
$\hat{m}(Y)=m$, $\hat{s}(Y)=s$ and $(I_n-P_{\eta_m})Y=w$
is a a truncated normal, similar to \eqref{truncated},
where the truncation set is now an interval
$T_{m,s}(w)$  that depends on the polyhedron
(i.e., on $m$ and $s$) and on $w$.
In particular,
$$
\mathcal L( \eta_m'Y ~|~ \hat{m}(Y)=m,\hat{s}(Y)=s,(I_n-P_{\eta_m})Y=w)
\quad=\quad \mathcal L(Z ~|~ Z \in T_{m,s}(w)),
$$
where $\mathcal L(\dots)$ denotes the indicated (conditional) distributions,
where $Z \sim N(\eta_m'\theta, \sigma^2 \|\eta_m\|^2)$, and where 
$T_{m,s}(w)$ is as above.
Note that this distribution is of the same functional form as 
\eqref{truncated}.
Using this distribution, \cite{lee2016} obtain a confidence interval
for $\eta_m'\theta$ with pre-specified coverage probability conditional
on $\hat{m}(Y)=m,\hat{s}(Y)=s, (I_n-P_{\eta_m})Y=w$
and hence also conditional on the (larger) event
$\{\hat{m}(Y)=m,\hat{s}(Y)=s\}$. The construction is the same as that used
to obtain the interval $[\hat{L}(Z), \hat{U}(Z)]$ in the file-drawer problem
considered earlier. And, again similar to the file-drawer problem, 
the conditional
expected length of this interval is infinite because the truncation set
$T_{m,s}(w)$ is always bounded either from above or below
\citep[Proposition 2]{kivaranovic2020}.

Next, suppose that the model selection step is randomized as proposed
by~\cite{tian2018}. Take $\omega \sim N(0,\tau^2 I_n)$ 
independent of $Y$. We will select a model based on the
randomized data $Y+\omega$, while
the original data $Y$ will be used for subsequent inference.
For model selection, we first 
compute the lasso-estimator $\hat{\beta}(Y+\omega)$
from the randomized data. The non-zero coefficients of this estimator
and their signs again give a model $\hat{m}(Y+\omega)$ and
a sign-vector $\hat{s}(Y+\omega)$ (provided that $\hat{m}(Y+\omega)\neq 0$).
For $m$ and $s$ as before, 
we will use the polyhedral method to obtain a confidence interval
for $\eta_m'\theta$ with pre-specified coverage probability
conditional on the event $\{\hat{m}(Y+\omega)=m,\hat{s}(Y+\omega)=s\}$
that is based on the conditional distribution of $\eta_m'Y$.
The event $\{\hat{m}(Y+\omega)=m, \hat{s}(Y+\omega)=s\}$
is again a polyhedron, but now in $(Y+\omega)$-space.
Arguing as in the preceding paragraph, we see that 
$$
\mathcal L( \eta_m'(Y+\omega) ~|~ \hat{m}(Y+\omega)=m,
	\hat{s}(Y+\omega)=s,(I_n-P_{\eta_m})(Y+\omega)=w)
\quad=\quad \mathcal L(Z +R ~|~ Z +R \in T_{m,s}(w)),
$$
where $Z$ and $T_{m,s}(w)$ are as before and where
$R \sim N(0,\tau^2\|\eta_m\|^2)$ is independent of $Z$.
Because we use the estimator $\eta_m'Y$, the conditional distribution of
interest is
$$
\mathcal L( \eta_m'Y ~|~ \hat{m}(Y+\omega)=m,
	\hat{s}(Y+\omega)=s,(I_n-P_{\eta_m})(Y+\omega)=w)
\quad=\quad \mathcal L(Z ~|~ Z +R \in T_{m,s}(w)).
$$
Based on this distribution, \cite{tian2018} obtain a confidence interval
for $\eta_m'\theta$ with pre-specified coverage probability conditional
on $\hat{m}(Y+\omega)=m, \hat{s}(Y+\omega)=s,(I_n-P_{\eta_m})(Y+\omega)=w$
and hence also conditional on the (larger) event
$\{\hat{m}(Y+\omega)=m, \hat{s}(Y+\omega)=s\}$. 
Because the distribution in the preceding display is of the same 
functional form as \eqref{truncatedRandomized}, the confidence interval
of \cite{tian2018} has properties similar to those of the interval
$[\check{L}(Z), \check{U}(Z)]$ that we constructed in the
randomized file-drawer problem. In particular, its length is
always bounded. We will return to the Lasso later in 
Subsection~\ref{subsec_lasso} to cover some more technical details;
in particular, we will
explicitly compute the upper bound on confidence interval length
and also consider the case where the conditioning is on the
selected model only (and not on the signs).

\section{Main technical result} \label{sec_core}

Here, we study confidence intervals based on an observation
from the truncated Gaussian distribution 
\begin{equation} \label{cond_rv}
  Z ~ | ~ Z + R \in T,
\end{equation}
where $Z$ and $R$ are independent, $Z\sim N(\mu,\sigma^2)$
with $\mu\in\mathbb R$ and $\sigma^2 \in (0,\infty)$,
$R\sim N(0,\tau^2)$ with $\tau^2 \in (0,\infty)$,
and where the truncation set $T$ is of the form
\begin{equation} \label{trunc_set}
  T ~ = ~ \bigcup_{i=1}^k ~ (a_i,b_i)
\end{equation}
with $k \in \mathbb N$ and $\infty \leq a_1 < b_1 < \dots < a_k < b_k \leq 
\infty$. 

Let $\Phi(t)$ be the cumulative distribution function (c.d.f.) 
of the standard normal distribution and denote by 
$F_{\mu,\sigma^2}^{\tau^2}(z)$ the conditional c.d.f. of the random variable in 
\eqref{cond_rv}. For sake of readability, we do not show the dependence of this 
c.d.f. on $T$ in the notation. For $\alpha \in (0,1)$
and $z\in \mathbb R$, let $\mu_\alpha(z)$ satisfy
 \begin{equation} \label{eq_bound}
  F_{\mu_\alpha(z),\sigma^2}^{\tau^2}(z) ~ = ~ 1-\alpha.
 \end{equation}
The quantity $\mu_\alpha(z)$ is well-defined and 
strictly increasing as a function of $\alpha$ for 
fixed $z \in \mathbb R$ (cf. Lemma~\ref{le_cdf}). For all 
$\alpha_1,\alpha_2 \in (0,1)$ such that $\alpha_1 < \alpha_2$, we have
\begin{equation}\label{eq_coverage}
  \mathbb P \left(
  \mu \in [\mu_{\alpha_1}(Z),\mu_{\alpha_2}(Z)] ~ | ~ Z + R \in T
  \right) ~ = ~ \alpha_2 - \alpha_1
\end{equation}
by a textbook result for confidence bounds (e.g., Chapter 3.5 in 
\citealp{lehmann2006}). A common choice is to set $\alpha_1=\alpha/2$ and 
$\alpha_2=1-\alpha/2$ such that, conditional on $\{Z + R \in T\}$, 
$[\mu_{\alpha_1}(Z),\mu_{\alpha_2}(Z)]$ is an equal-tailed confidence interval for $\mu$ 
at level $1-\alpha$. Another option is to choose $\alpha_1$ and $\alpha_2$ such that, 
conditional on $\{Z + R \in T\}$, $[\mu_{\alpha_1}(Z),\mu_{\alpha_2}(Z)]$ is an unbiased 
confidence interval at level $1-\alpha$ (cf. Chapter 5.5 in 
\citealp{lehmann2006}).

It is easy to see that, as $\tau^2$ goes to $0$, $F_{\mu,\sigma^2}^{\tau^2}(z)$ 
converges weakly to the c.d.f. of the truncated normal distribution 
$Z|Z\in T$, which leads to confidence intervals with
infinite expected length if (and only if)
the truncation set $T$ is bounded from above
or from below \citep[Proposition 1]{kivaranovic2020}.
On the other hand, as $\tau^2$ goes to $\infty$, it is similarly easy to see
that $F_{\mu,\sigma^2}^{\tau^2}(z)$ converges weakly to the c.d.f. of the 
normal distribution with mean $\mu$ and variance $\sigma^2$. Hence, in the case 
where the intervals are based on $\lim_{\tau^2\to \infty} 
F_{\mu,\sigma^2}^{\tau^2}(z)$, the length of 
the resulting confidence interval equals
$\sigma \left(\Phi^{-1}(\alpha_2) - \Phi^{-1}(\alpha_1) \right)$. These 
observations suggest that, in the case where $\tau^2 \in (0,\infty)$,  the 
length of $[\mu_{\alpha_1}(z),\mu_{\alpha_2}(z)]$ might be
bounded somewhere between $\sigma 
\left(\Phi^{-1}(\alpha_2) - \Phi^{-1}(\alpha_1) \right)$ and $\infty$.
This idea is formalized by the following result.

\begin{theorem} \label{th_core}
Fix $\sigma^2$ and $\tau^2$ in $(0,\infty)$, $\mu\in\mathbb R$,
$T$  as in \eqref{trunc_set}, and $0<\alpha_1< \alpha_2<1$. For each $x\in \mathbb R$,
we then have
  \begin{equation*}
    \mu_{\alpha_2}(z) - \mu_{\alpha_1}(z) ~ < ~ \sigma \left(\Phi^{-1}(\alpha_2) 
- \Phi^{-1}(\alpha_1) \right) \sqrt{1+\frac{\sigma^2}{\tau^2}},
  \end{equation*}
where
$\mu_{\alpha_i}(z)$ is defined as in \eqref{eq_bound} with $\alpha_i$ replacing $q$,
$i=1,2$.
If $\sup T = b_k < \infty$, the left-hand side  converges to the 
right-hand side as $z \to 
\infty$. The same is true if $\inf T = a_1 > -\infty$ and $z \to -\infty$.
\end{theorem}

The upper bound in Theorem~\ref{th_core} is easy to compute, 
does not depend on the truncation set $T$, and increases as  
the amount of randomization $\tau^2$ decreases.
As $\tau^2$ goes to zero, the upper bound diverges to infinity,
in accordance with \cite{kivaranovic2020}.
On the other hand, as $\tau^2$ goes to $\infty$, the upper bound converges to 
$\sigma \left(\Phi^{-1}(\alpha_2) - \Phi^{-1}(\alpha_1) \right)$. 
Also note that the upper bound is sharp if $T$ is bounded from above
or from below, i.e., in the
case where confidence intervals based on $Z|Z\in T$ have infinite
expected length.
Finally, as detailed in Remark~\ref{remark_variance} below, the upper bound
can also be used in the unknown-variance case, i.e., if $\sigma^2$ or $\tau^2$
or both are replaced by estimators.

In Figure~\ref{fig_length}, we plot the length of 
$[\mu_{\alpha_1}(z),\mu_{\alpha_2}(z)]$ as a function of $z$ for several truncation sets 
$T$. In the left panel the truncation set is of the form $(-a, a)$ (bounded)
and in the 
right panel the truncation set is of the form $(-\infty, -a) \cup 
(a,\infty)$ (unbounded with a gap in the middle). 
The top dashed line denotes 
the upper bound $\sigma
\left(\Phi^{-1}(\alpha_2) - \Phi^{-1}(\alpha_1) \right)\sqrt{1+\sigma^2/\tau^2}$;
the bottom dashed line 
denotes $\sigma\left(\Phi^{-1}(\alpha_2) - \Phi^{-1}(\alpha_1) 
\right)$, i.e., the length of the confidence interval with 
unconditional coverage. In the left panel, we see that $\mu_{\alpha_2}(z) - 
\mu_{\alpha_1}(z)$ approximates the upper bound as $|z|$ diverges. The smaller $a$,
i.e., the smaller the truncation set $T$, the faster the convergence. 
Also in this case, where the truncation set is a bounded
interval, the left panel 
indicates that the length is 
bounded from below by $\sigma\left(\Phi^{-1}(\alpha_2) 
- \Phi^{-1}(\alpha_1) \right)$. In the right panel, we see that our upper bound is 
not sharp when the truncation set is unbounded on both sides. It appears that
the length converges to 
$\sigma\left(\Phi^{-1}(\alpha_2) - \Phi^{-1}(\alpha_1) \right)$ as $z$ diverges. However, 
as the gap of the truncation set becomes larger (i.e.,  as $a$ grows), we 
see that the length approximates the upper bound for values around $a$ and 
$-a$. Finally, $\sigma\left(\Phi^{-1}(\alpha_2) - \Phi^{-1}(\alpha_1) \right)$ is 
not an lower bound in this case, as we can see that the length is considerably 
smaller for values in the gap $(-a,a)$. It seems that the length 
converges to zero for values of $z$ around $0$ as $a$ diverges.

\begin{figure} 
\centering
\begin{minipage}{.5\textwidth}
  \centering
  \centerline{\includegraphics[trim=0 10 0 0, clip, 
width=1\textwidth]{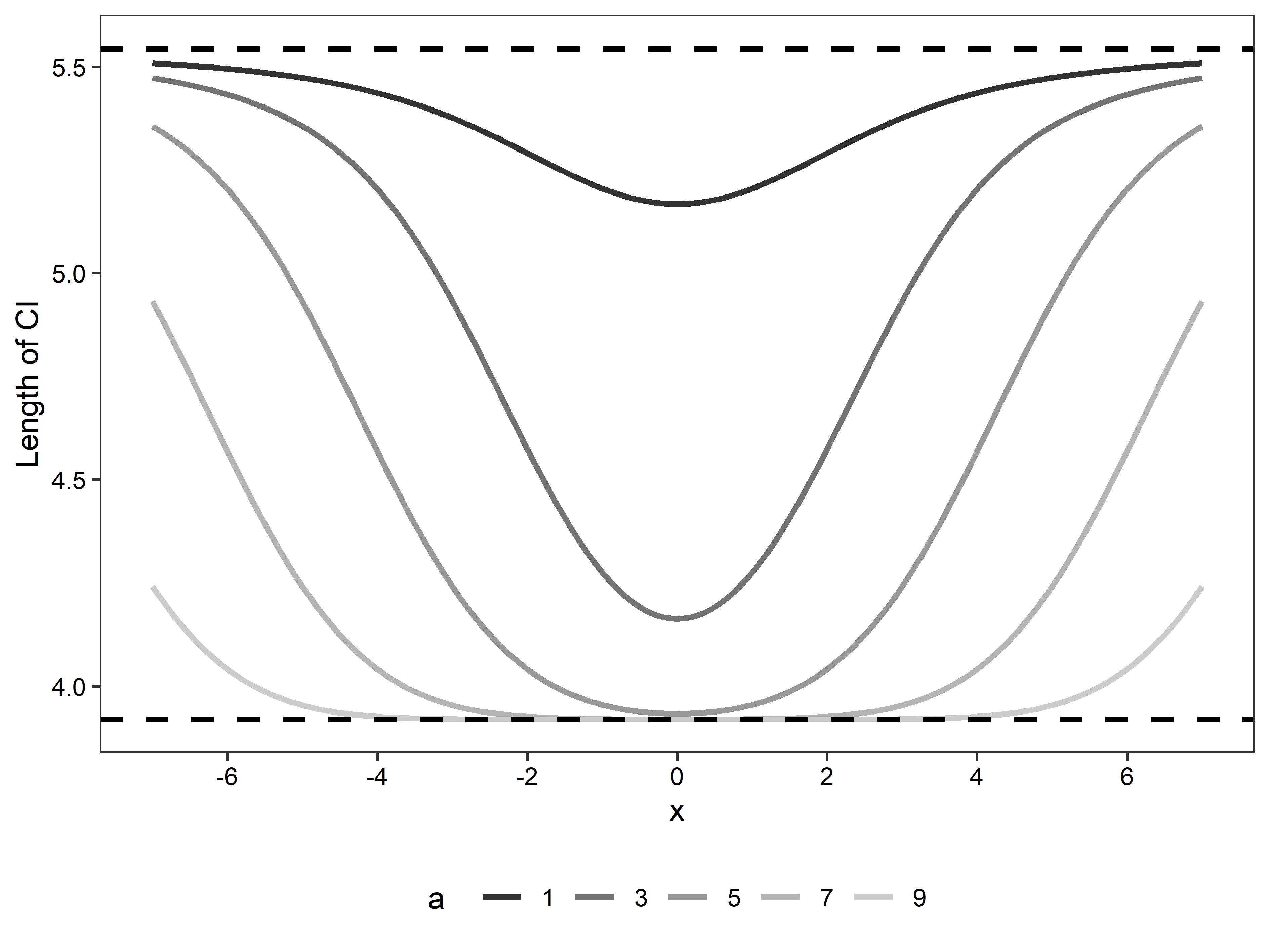}}
\end{minipage}%
\begin{minipage}{.5\textwidth}
  \centering
  \centerline{\includegraphics[trim=0 10 0 0, clip, 
width=1\textwidth]{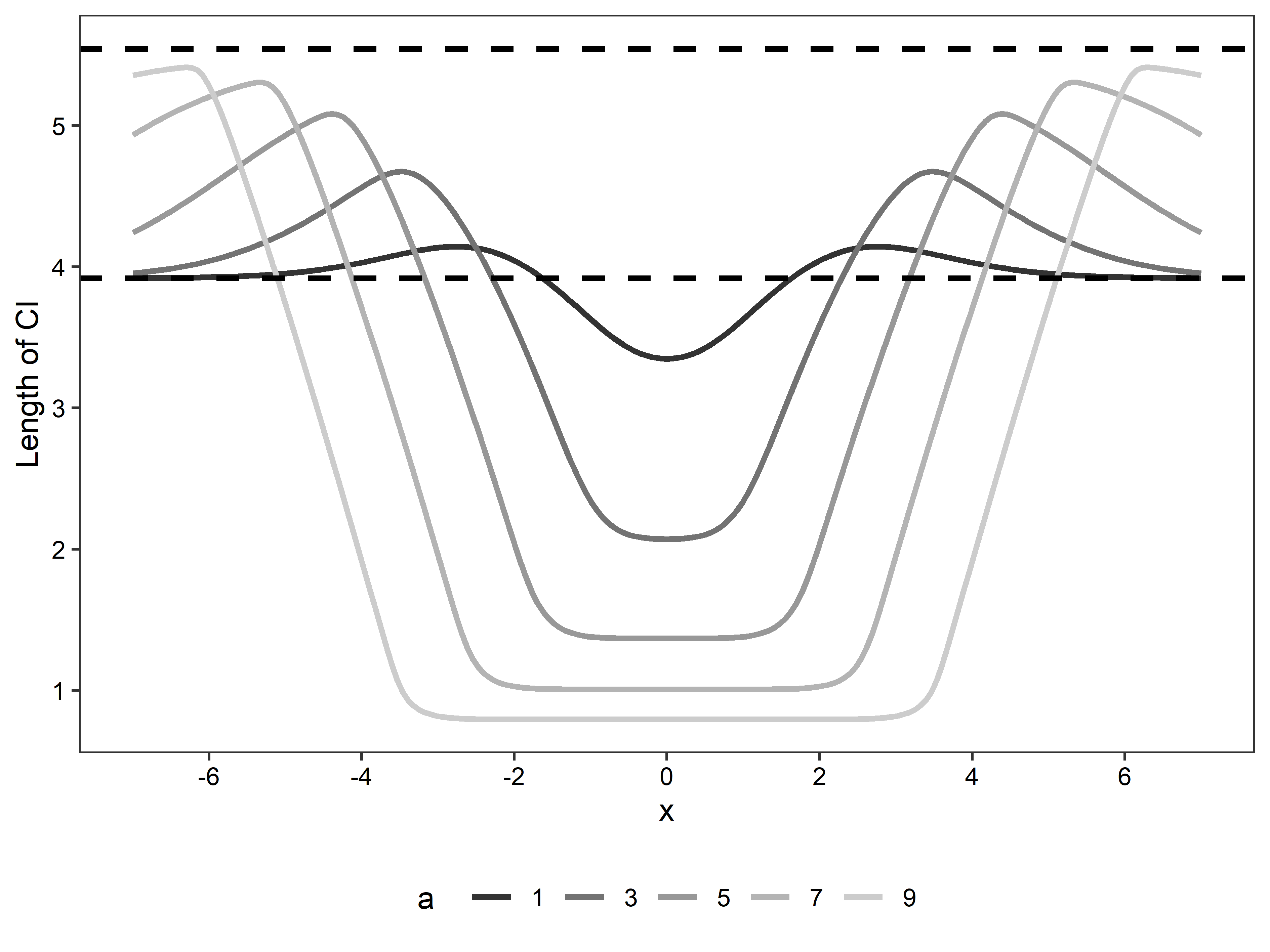}}
\end{minipage}
\caption{The length $[\mu_{\alpha_1}(z),\mu_{\alpha_2}(z)]$ is plotted as a function of 
$z$. In the left panel the truncation sets are of the form $(-a, a)$ and in the 
right panel truncation sets are of the form $(-\infty, -a) \cup (a,\infty)$. 
The different values for $a$ are shown below the plot. The remaining 
parameters are $\alpha_1=1-\alpha_2=0.025$ and $\sigma^2 = \tau^2 = 1$.}
\label{fig_length}
\end{figure}

In Figure~\ref{fig_expected_length}, we plot Monte-Carlo approximations of
the conditional expected length 
of $[\mu_{\alpha_1}(Z),\mu_{\alpha_2}(Z)]$ given $Z+R\in T$ as a function of $\mu$,
for the same scenarii as considered in Figure~\ref{fig_length}.
For the Monte-Carlo simulations, we draw $2000$ independent 
samples  from the distribution in 
\eqref{cond_rv}, compute the confidence interval for each 
and estimate the conditional expected length by the sample mean 
of the lengths. 
In the left panel, we observe that the conditional expected length 
is minimized at $\mu=0$ and converges to the upper bound as $\mu$ 
diverges. We also note that the smaller the truncation set, the larger the 
conditional expected length.  This means that 
the conditional expected length is 
close to the upper bound if the probability of the conditioning event is small.
In the right panel, we again observe that the conditional expected length is 
minimized at $\mu=0$. However, it seems to converge not the upper bound 
but to $\sigma\left(\Phi^{-1}(\alpha_2) - \Phi^{-1}(\alpha_1) \right)$ as $\mu$ diverges. 
Surprisingly, the conditional expected length at $\mu=0$ decreases as 
$a$ increases and becomes significantly smaller than 
$\sigma\left(\Phi^{-1}(\alpha_2) - \Phi^{-1}(\alpha_1) \right)$.  
In particular, 
and in contrast to the left panel, 
we here observe that, for $\mu$ close to zero, 
the conditional expected length decreases as 
the probability of the conditioning event decreases; for $\mu$ not close
to zero, the situation is again as in the left panel.

\begin{figure} 
\centering
\begin{minipage}{.5\textwidth}
  \centering
  \centerline{\includegraphics[trim=0 10 0 0, clip, 
width=1\textwidth]{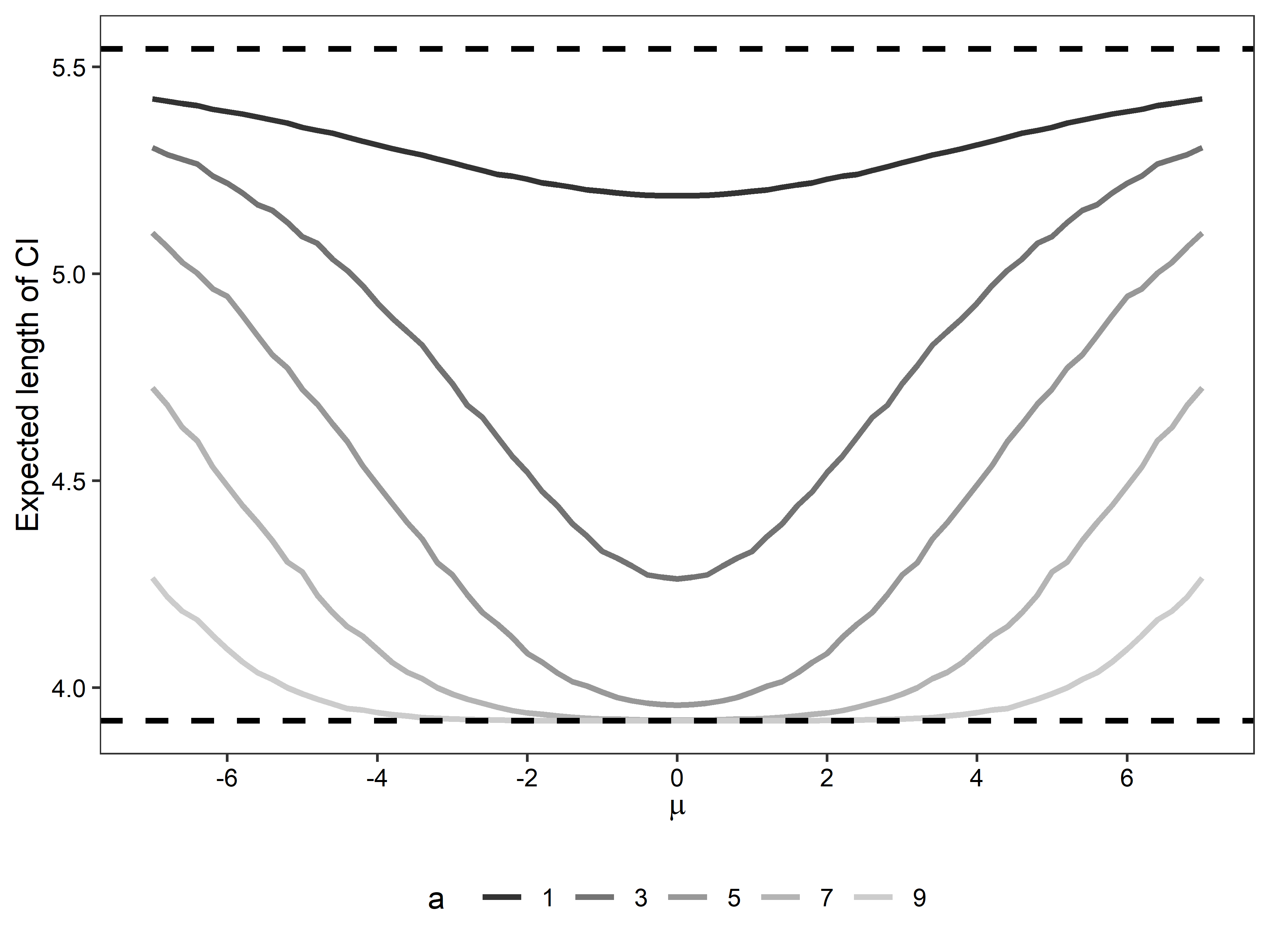}}
\end{minipage}%
\begin{minipage}{.5\textwidth}
  \centering
  \centerline{\includegraphics[trim=0 10 0 0, clip, 
width=1\textwidth]{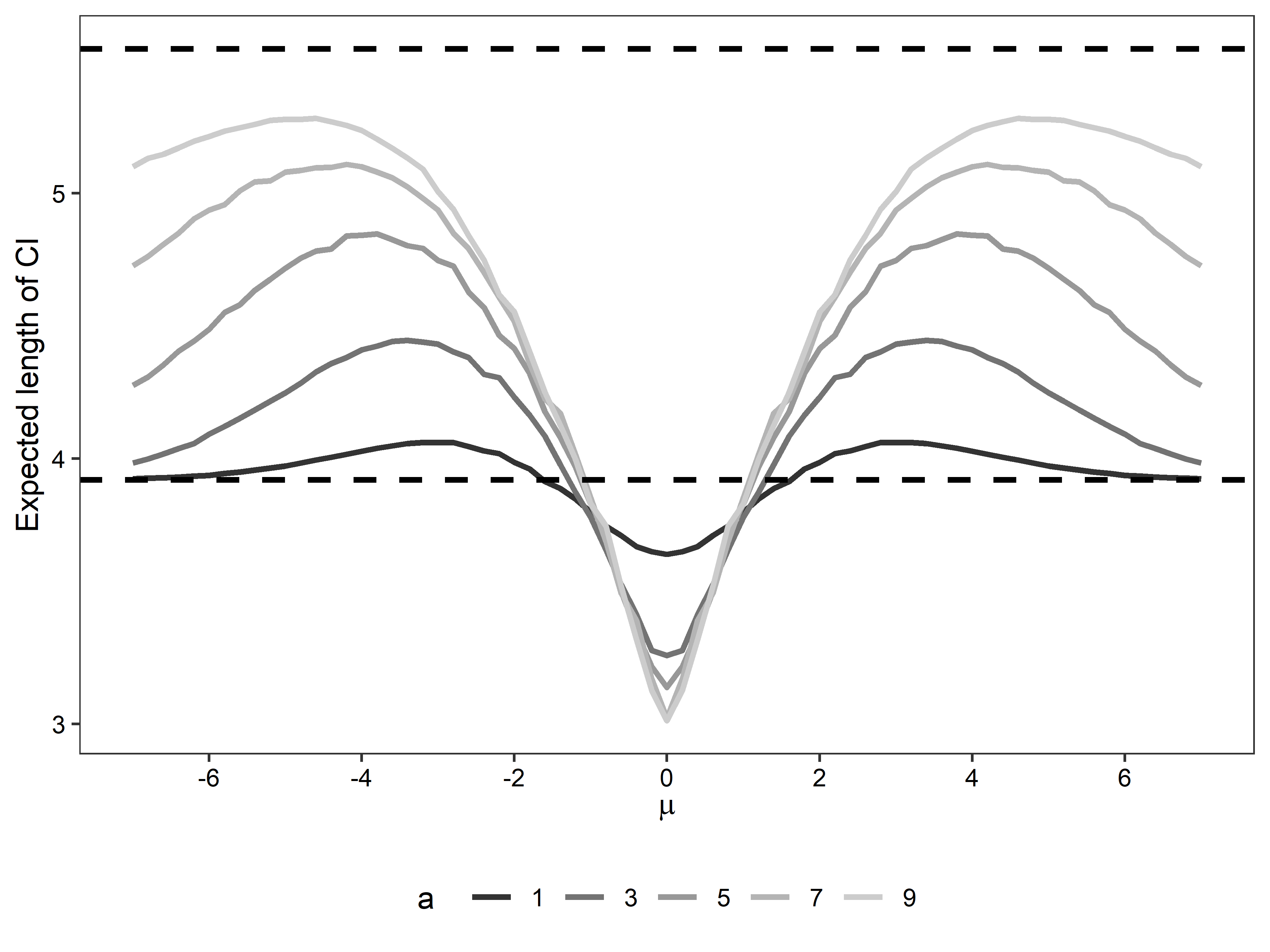}}
\end{minipage}
\caption{The conditional expected length of $[\mu_{\alpha_1}(Z),\mu_{\alpha_2}(Z)]$ is 
plotted as a function of $\mu$. In the left panel the truncation sets are of 
the form $(-a, a)$ and in the right panel the truncation sets are of the form 
$(-\infty, -a) \cup (a,\infty)$. The different values for $a$ are shown below 
the plot. The remaining parameters are $\alpha_1=1-\alpha_2=0.025$ and $\sigma^2 = 
\tau^2 = 1$.
}
\label{fig_expected_length}
\end{figure}

\begin{remark}[The unknown-variance case]  \normalfont \label{remark_variance}
In the discussion so far, the variances $\sigma^2$ and $\tau^2$ were assumed
to be known. Assume now that one or both of these variances are unknown,
and that variance estimators 
$\hat{\sigma}^2$ and $\hat{\tau}^2$ are available that take 
values in $(0,\infty)$;
if one of the variances is
known, set the corresponding estimator equal to its value.
We note that the truncation set $T$ as in \eqref{trunc_set}
may depend on the (estimated)
variances, and we stress this dependence here by denoting it by $\hat{T}$.
A natural way to obtain a confidence interval in the unknown variance case
is to proceed as before, using the variance estimators as plug-ins.
In particular, following the construction leading up to Theorem~\ref{th_core},
with $\sigma^2$, $\tau^2$ and $T$ replaced by $\hat{\sigma}^2$, 
$\hat{\tau}^2$ and $\hat{T}$, respectively, we obtain a confidence
interval for $\mu$
that we denote by $[\hat{\mu}_{\alpha_1}(Z), \hat{\mu}_{\alpha_2}(Z)]$.
For this interval, a relation like \eqref{eq_coverage} typically does not hold,
and its conditional coverage probability depends on the estimators
$\hat{\sigma}^2$ and $\hat{\tau}^2$; this topic is further discussed
in Section 8.1 of \cite{lee2016}. 
However, Theorem~\ref{th_core} can
still be used to obtain an upper bound on the length of this interval:
Using the theorem with $\sigma^2$, $\tau^2$ and $T$
replaced by $\hat{\sigma}^2$, $\hat{\tau}^2$ and $\hat{T}$, respectively,
we see that $\hat{\mu}_{\alpha_2}(Z) - \hat{\mu}_{\alpha_1}(Z)$ is smaller than
$\hat{\sigma}(\Phi^{-1}(\alpha_2) - 
\Phi^{-1}(\alpha_1))\sqrt{1+\hat{\sigma}^2/\hat{\tau}^2}$.
This is because the theorem only requires that \eqref{eq_bound} holds
with $\alpha$ replaced by $\alpha_i$, $i=1,2$.
\end{remark}

\section{Application to selective inference} \label{sec_selective}

Throughout this section,
we consider the generic sample mean setting that lies at the heart
of the polyhedral method and of many procedures derived from it. 
To use these methods with specific model selectors,
one essentially has to reduce the specific situation at hand
to the generic setting considered here. 
This is demonstrated by the examples in Section~\ref{sec_examples};
further examples can be obtained from the papers on selective inference
mentioned in Section~\ref{sec_intro}.
For the sake of exposition, we focus here on the known-variance case. 
As outlined in Remark~\ref{remark_variance}, our results can also be applied
in situations where variances are estimated, mutatis mutandis.

Let $n\in\mathbb N$ and let $Z_1,\dots, Z_n$ be i.i.d. normal random variables
with mean $\mu\in \mathbb R$ and variance $\sigma^2\in (0,\infty)$. The
outcome of a model-selection procedure can often be characterized through
an event of the form $\bar Z_n \in T$, where $\bar Z_n$ denotes the sample mean
and $T$ is as in \eqref{trunc_set}.
The polyhedral method provides a confidence interval for $\mu$
with pre-specified coverage 
probability conditional on the event $\bar Z_n \in T$,
based on the conditional distribution of 
$\bar Z_n | \bar Z_n \in T$.
The (conditional) expected length of this interval is infinite if and only if
$T$ is bounded from above or from below; cf. \cite{kivaranovic2020}.

\subsection{Data carving}
\label{carving}

Data carving \citep{fithian2017}
means that only a subset of the data is used for model selection 
while the entire dataset is used
for inference based on the selected model. 
Let $\delta \in (0,1)$ be such that $\delta n$ is a positive integer.
If only the first $\delta n$ observations are used for selection,
the outcome of a model-selection procedure can often be characterized through
an event of the form $\bar Z_{\delta n} \in T$; here $\bar Z_{\delta n}$ is
the sample mean of the first $\delta n$ observations and the truncation set
$T$ is as in \eqref{trunc_set}. (Of course, the truncation sets
used by the plain polyhedral method and by the polyhedral method with
data carving might differ.)
Inference for $\mu$ is now based on the conditional 
distribution 
\begin{equation} \label{rv_carving}
    \bar Z_n ~ | ~ \bar Z_{\delta n} \in T.
\end{equation}
In the preceding display,
the conditioning variable $\bar Z_{\delta n}$ 
can be written as
$\bar Z_{\delta n}= \bar Z_n + \bar R$ for
$\bar R = \bar Z_{\delta n} - \bar Z_n$.
Using elementary properties of the normal distribution, 
it is easy to see that
$\bar{Z}_{n}$ and $\bar R$ are independent. 
We thus obtain the following:

\begin{proposition}
  The conditional c.d.f. of the random variable in \eqref{rv_carving} is equal 
to $F_{\mu,\tilde \sigma^2}^{\tilde \tau^2}(z)$ with truncation set $T$,
\begin{equation*}
    \tilde \sigma^2 ~ = ~ \frac{\sigma^2}{n} \quad \text{and} \quad \tilde 
\tau^2 ~ = ~ \frac{\sigma^2}{n}\frac{1-\delta}{\delta}.
\end{equation*}
\end{proposition}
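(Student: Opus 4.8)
The plan is to rewrite the conditional distribution in \eqref{rv_carving} in the generic form \eqref{cond_rv} that underlies the c.d.f.\ $F_{\mu,\sigma^2}^{\tau^2}$. Following the hint in the text, I set $X=\bar X_n$ and $U=\bar X_{\delta n}-\bar X_n$, so that the conditioning event $\{\bar X_{\delta n}\in T\}$ becomes $\{X+U\in T\}$, exactly as in \eqref{cond_rv}. Since $\bar X_n$ is the mean of $n$ i.i.d.\ $N(\mu,\sigma^2)$ variables, we have $X\sim N(\mu,\sigma^2/n)$, which identifies $\tilde\sigma^2=\sigma^2/n$. It then remains to show that $U$ is independent of $X$ and that $U\sim N(0,\tilde\tau^2)$ with $\tilde\tau^2=(\sigma^2/n)(1-\delta)/\delta$; once this is in place, the distribution in \eqref{rv_carving} is precisely that of $X\mid X+U\in T$ with the claimed parameters, and the conclusion follows directly from the definition of $F_{\mu,\sigma^2}^{\tau^2}$.

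To carry this out, I would first observe that both $X$ and $U$ are linear combinations of $X_1,\dots,X_n$, hence jointly normal; consequently, independence of $X$ and $U$ is equivalent to $\mathrm{Cov}(X,U)=0$. Writing $S_1=\sum_{i=1}^{\delta n}X_i$ and $S_2=\sum_{i=\delta n+1}^{n}X_i$ for the sums over the two disjoint blocks of observations, we have $\bar X_n=(S_1+S_2)/n$ and $U=S_1\,(1-\delta)/(\delta n)-S_2/n$. Because $S_1$ and $S_2$ are independent with $\mathrm{Var}(S_1)=\delta n\,\sigma^2$ and $\mathrm{Var}(S_2)=(1-\delta)n\,\sigma^2$, the cross terms vanish and a direct computation gives $\mathrm{Cov}(X,U)=\frac{(1-\delta)\sigma^2}{n}-\frac{(1-\delta)\sigma^2}{n}=0$, which establishes the required independence. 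The mean of $U$ is $\mathbb E[\bar X_{\delta n}]-\mathbb E[\bar X_n]=\mu-\mu=0$.

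For the variance I would use the cleanest route, which exploits the independence just obtained: from $\bar X_{\delta n}=X+U$ with $X$ and $U$ independent we get $\mathrm{Var}(U)=\mathrm{Var}(\bar X_{\delta n})-\mathrm{Var}(\bar X_n)=\frac{\sigma^2}{\delta n}-\frac{\sigma^2}{n}=\frac{\sigma^2}{n}\cdot\frac{1-\delta}{\delta}=\tilde\tau^2$, as claimed. (Alternatively, $\mathrm{Var}(U)$ can be computed directly from the block decomposition of the previous paragraph.)

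The argument is elementary throughout, so there is no genuine obstacle; the only point requiring a moment of care is the independence of $X=\bar X_n$ and $U=\bar X_{\delta n}-\bar X_n$, which may look surprising at first because both statistics are built from the overlapping first $\delta n$ observations. The covariance cancellation above resolves this, and it is exactly the structural feature that allows data carving to be subsumed under Theorem~\ref{th_core}.
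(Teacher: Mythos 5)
Your proof is correct and follows exactly the route the paper indicates (the decomposition $\bar X_{\delta n}=\bar X_n+U$ with $U=\bar X_{\delta n}-\bar X_n$, independence via zero covariance of jointly Gaussian variables, and the variance computation $\sigma^2/(\delta n)-\sigma^2/n$); the paper merely asserts this as an elementary consequence of properties of the normal distribution, and you have supplied the details faithfully.
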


Let $\tilde \mu_\alpha(z)$ 
satisfy $F_{\tilde \mu_\alpha(z),\tilde \sigma^2}^{\tilde \tau^2}(z) = 1-q$, where $\tilde 
\sigma^2$ and $\tilde \tau^2$ are as in the proposition.
Then $[\tilde \mu_{\alpha_1}(\bar Z_n), \tilde \mu_{\alpha_2}(\bar Z_n)]$ is
a confidence interval for $\mu$ with conditional coverage probability
$\alpha_2-\alpha_1$ given $\bar Z_{\delta n} \in T$ ($0<\alpha_1< \alpha_2<1)$.
Theorem~\ref{th_core} 
implies that
\begin{equation*}
    \tilde \mu_{\alpha_2}(z) - \tilde \mu_{\alpha_1}(z) ~ < ~ 
\frac{\sigma}{\sqrt{n}} \left(\Phi^{-1}(\alpha_2) - \Phi^{-1}(\alpha_1) \right)
\frac{1}{\sqrt{1-\delta}}.
\end{equation*}
We see that the length of $[\tilde 
\mu_{\alpha_1}(\bar Z_n),\tilde \mu_{\alpha_2}(\bar Z_n)]$ shrinks at the same 
$\sqrt{n}$-rate as in the unconditional case. The price of conditioning is at 
most the factor $1/\sqrt{1-\delta}$. Note that, for $\delta=1$,
data carving reduces to the polyhedral method.

A corresponding sample-splitting method is the following:
Again, the model is selected based on the first $\delta n$ observations,
resulting in the same event $\bar Z_{\delta n} \in T$. For subsequent
inference, however, only the last $(1-\delta) n$ observations are used.
Because these are independent of $\bar Z_{\delta n}$, one obtains
the standard confidence interval for $\mu$ based on 
the last $(1-\delta)n$ observations,
whose length is
$\sigma (\Phi^{-1}(\alpha_2) - \Phi^{-1}(\alpha_1)) /\sqrt{(1-\delta)n}$.
By the inequality in the preceding display, this 
sample-splitting interval is always
strictly larger than the interval obtained with data carving.

\subsection{Selection with a randomized response}
\label{randomized}

This method of \cite{tian2018} performs model-selection with 
a randomized version of the data (i.e., after adding noise),
while inference based on the selected model is performed with
the original data. Let $\omega \sim N(0,\tau^2 I_n)$ be a noise vector
independent of $Z_1,\dots, Z_n$ and write $\bar \omega_n$ for the mean
of its components. If, in the model-selection step, the randomized
data $Z_1+\omega_1,\dots Z_n + \omega_n$ are used instead of the original data,
then the outcome of the model-selection step can often be characterized
through an event of the form $\bar Z_n + \bar \omega_n \in T$ where
$T$ again is a truncation set as in Section~\ref{sec_core} (possibly different
from the truncation sets used by the plain polyhedral method or by 
the polyhedral method with data carving).
Here, inference for $\mu$ is based on the conditional distribution 
\begin{equation} \label{rv_lasso}
\bar Z_n ~|~\bar Z_n+\bar \omega_n \in T,
\end{equation} 
which is easy to compute.

\begin{proposition}
The conditional c.d.f. of the random variable in \eqref{rv_lasso} is 
equal to $F_{\mu, \bar \sigma^2}^{\bar \tau^2}(z)$ with truncation set 
$T$,
\begin{equation*}
    \bar \sigma^2 ~ = ~ \frac{\sigma^2}{n}
    \quad \text{and}\quad \bar \tau^2 ~ = ~ \frac{\tau^2}{n}.
\end{equation*}
\end{proposition}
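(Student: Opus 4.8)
The plan is to recognize the conditional distribution in \eqref{rv_lasso} as a special case of the generic conditional distribution \eqref{cond_rv}, whose c.d.f. is by definition $F_{\mu,\sigma^2}^{\tau^2}$, and then simply read off the parameters. Concretely, I would set $X = \bar X_n$ and $U = \bar\omega_n$, so that the conditioning event $\{\bar X_n + \bar\omega_n \in T\}$ coincides with $\{X + U \in T\}$, with the same truncation set $T$.

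First I would verify the marginal distributions of these two building blocks. Since $X_1,\dots,X_n$ are i.i.d. $N(\mu,\sigma^2)$, the sample mean satisfies $\bar X_n \sim N(\mu, \sigma^2/n)$, which identifies $X \sim N(\mu, \bar\sigma^2)$ with $\bar\sigma^2 = \sigma^2/n$. Likewise, $\bar\omega_n$ is the average of the $n$ i.i.d. components of $\omega \sim N(0,\tau^2 I_n)$, so $\bar\omega_n \sim N(0, \tau^2/n)$, which identifies $U \sim N(0,\bar\tau^2)$ with $\bar\tau^2 = \tau^2/n$.

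Next I would establish the independence of $X = \bar X_n$ and $U = \bar\omega_n$, which is required by the definition underlying \eqref{cond_rv}. This follows immediately from the assumption that the noise vector $\omega$ is independent of $X_1,\dots,X_n$: the random variable $\bar X_n$ is a function of $X_1,\dots,X_n$ alone while $\bar\omega_n$ is a function of $\omega$ alone, hence the two are independent. With the marginals and the independence in hand, the conditional distribution in \eqref{rv_lasso} is precisely the one defining $F_{\mu,\bar\sigma^2}^{\bar\tau^2}$ with truncation set $T$, and the claim follows by direct comparison with \eqref{cond_rv}.

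There is essentially no serious obstacle here; the only care needed is bookkeeping, namely matching the roles of $X$, $U$, and $T$ in \eqref{cond_rv} with $\bar X_n$, $\bar\omega_n$, and the given $T$ in \eqref{rv_lasso}, and confirming that the independence built into the definition of $F_{\mu,\sigma^2}^{\tau^2}$ is inherited from the independence of the data and the added noise.
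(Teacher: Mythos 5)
Your proposal is correct and is exactly the argument the paper has in mind: the paper treats this proposition as immediate (``which is easy to compute'') and the intended reasoning is precisely your identification $X=\bar X_n\sim N(\mu,\sigma^2/n)$, $U=\bar\omega_n\sim N(0,\tau^2/n)$, with independence inherited from that of $\omega$ and the data, so that \eqref{rv_lasso} is an instance of \eqref{cond_rv}.
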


Let $\bar \mu_\alpha(z)$ satisfy 
$F_{\bar \mu_\alpha(z), \bar \sigma^2}^{\bar \tau^2}(z) = 1-q$, 
where $\bar \sigma^2$ and 
$\bar \tau^2$ are as in the proposition, so that 
$[\bar\mu_{\alpha_1}(\bar Z_n), \bar\mu_{\alpha_2}(\bar Z_n)]$ is a confidence interval
for $\mu$ with conditional coverage probability $\alpha_2-\alpha_1$ given
$\bar Z_n+\bar\omega_n\in T$ ($0<\alpha_1<\alpha_2<1$).
With Theorem~\ref{th_core}, we see that
\begin{equation*}
\bar\mu_{\alpha_1}(z) - \bar\mu_{\alpha_2}(z) ~<~
\frac{\sigma}{\sqrt{n}} \left( \Phi^{-1}(\alpha_2) - \Phi^{-1}(\alpha_1)\right)
\sqrt{1+\frac{\sigma^2}{\tau^2}}.
\end{equation*}
Similarly to data carving,
the length of the interval shrinks at the same $\sqrt{n}$-rate as
in the unconditional case, where the price of conditioning is controlled
by the factor $\sqrt{1+\sigma^2/\tau^2}$, and the method reduces to
the polyhedral method if $\tau=0$.

To obtain a sample-splitting method that is comparable to selection with
a randomized response, we proceed as follows: We use the first 
$n \sigma^2/(\sigma^2+\tau^2)$ observations for model selection and the
remaining $m = n \tau^2/(\sigma^2+\tau^2)$ observations for inference
(assuming, for simplicity, that these numbers are positive integers).
Write $\tilde Z_m$ for the mean of the last $m$ observations. Because the
first $n-m$ observations are independent of $\tilde Z_m$, we thus obtain the
standard confidence interval for $\mu$ based on $\tilde Z_m$ 
with length $\sigma(\Phi^{-1}(\alpha_2) - \Phi^{-1}(\alpha_1))/\sqrt{m} =
(\sigma/\sqrt{n})(\Phi^{-1}(\alpha_2) - \Phi^{-1}(\alpha_1))\sqrt{1+\sigma^2/\tau^2}$.
In terms of length, this interval is always dominated by 
$[\bar \mu_{\alpha_1} (\bar Z_n) - \bar \mu_{\alpha_2}(\bar Z_n)]$ considered above.

For data carving, choosing a corresponding sample splitting method
was obvious; cf. Subsection~\ref{carving}. This is not the case
in the setting considered here.  Nevertheless, the considerations
in the preceding paragraph show that selection with a randomized response
dominates any sample splitting scheme that uses at most
$m=n \tau^2 /(\sigma^2+\tau^2)$ observation for inference and the rest for
selection.

\begin{remark} \normalfont
Throughout this section, we have considered Gaussian data.
In non-Gaussian settings, our results can be applied asymptotically,
provided that, in \eqref{rv_carving} or \eqref{rv_lasso},
(i) the estimator used in the inference step, i.e.,  $\bar{Z}_n$,
as well as the random variables in the conditioning event
are asymptotically jointly normal
and (ii) the probability of the conditioning event
does not vanish.
\end{remark}

\section{Examples}
\label{sec_examples}

\subsection{Sequential testing with data carving}
\label{subsec_seq}

Consider a situation where an experiment $Z$ is to be repeated independently
$n$ times in order to determine the mean $\mu$ of $Z$. However, the
whole process is to be stopped at an earlier stage 
if results do not look promising, 
e.g., in a simple clinical trial.
In particular, the process is to be stopped if the mean 
$\bar Z_{\delta n}$ of the first $\delta n$ repetitions fails to exceed
a certain threshold $c$ (assuming that $\delta$ is a pre-determined fraction
so that $\delta n$ is an integer less than $n$).
In this situation, a confidence interval for $\mu$ is desired conditional
on the event that $\bar Z_{\delta n} > c$, that is, in the event that
the process was not stopped early.
If $Z$ is assumed to be Gaussian with mean $\mu$ and variance $\sigma^2$,
this situation can be handled using the results of Section~\ref{carving}:
Set $T = (c,\infty)$.
Based on the conditional distribution of $\bar Z_n$ given 
$\bar Z_{\delta n} \in T$, one obtains an equal-tailed
confidence interval for $\mu$ with coverage probability $1-\alpha$
whose length is less than
$\frac{2 \sigma}{\sqrt{n}}\Phi^{-1}(1-\alpha/2) /\sqrt{1-\delta}$.
In particular, the early stopping rule, i.e., conditioning on 
$\bar Z_{\delta n} \in T$, results in intervals that are longer than
the standard interval (that is constructed without early stopping, i.e.,
without conditioning)
by a factor of less than $1/\sqrt{1-\delta}$.

In this example, we did not consider controlling for explanatory variables
for simplicity;
allowing for additional explanatory variables to influence the experimental
outcome is more complex and will be studied elsewhere.
We also note that, in the setting of the present example,
data carving controls a conditional probability that is not commonly
considered in group sequential testing. 
For the substantial body of literature in that area,
we refer to \cite{Jen00a}.

\subsection{Lasso selection with a randomized response}
\label{subsec_lasso}

To complete the discussion of the randomized
Lasso from Section~\ref{sec_overview} we will use 
the assumptions and the notation maintained there.
Considered a model $m$ and a sign-vector $s$ so that
$\mathbb P(\hat{m}(Y)=m,\hat{s}(Y)=s) >0$.
We already know that
the conditional distribution of $\eta_m'Y$ given
$\hat{m}(Y+\omega)=m$, $\hat{s}(Y+\omega)=s$ and $(I_n-P_{\eta_m})Y=w$
is of the form \eqref{cond_rv}
with $Z\sim N(\check{\mu},\check{\sigma}^2)$
and $R\sim N(0,\check{\tau}^2)$,
where $\check{\mu} = \eta_m'\theta$,
$\check{\sigma}^2 = \sigma^2 \|\eta_m\|^2$,
$\check{\tau}^2 = \tau^2 \|\eta_m\|^2$,
and where $T = T_{m,s}(w)$ is an interval.
Choose $\alpha_i$, $i=1,2$, satisfying $0<\alpha_1<\alpha_2 < 1$
and choose $\check{\mu}_{\alpha_i(z)}$ so that 
$F^{\check{\tau}^2}_{\check{\mu}_{\alpha_i}(z),\check{\sigma}^2}(z) =
1-\alpha_i$, $i=1,2$, where the c.d.f. is computed with the truncation
set $T_{m,s}(w)$ replacing $T$.
Clearly, $\check{\mu}_{\alpha_i}(z)$
depends on $z$, on $m$ and $s$, and on $w$ (through the truncation set).
Set $\check{L}_{m,s,w}(z) = \check{\mu}_{\alpha_1}(z)$ and
$\check{U}_{m,s,w}(z) = \check{\mu}_{\alpha_2}(z)$.
Then, by construction,
$$
\mathbb P( \eta_m'\theta \in [\check{L}_{m,s,w}(\eta_m'Y), 
\check{U}_{m,s,w}(\eta_m'Y)]
\,|\, \hat{m}(Y+\omega) = m, \hat{s}(Y+\omega)=s,
(I_n -P_{\eta_m})(Y+\omega) = w) \quad=\quad \alpha_2 - \alpha_1
$$
and, by Theorem~\ref{th_core},
$$
\check{U}_{m,s,w}(\eta_m'Y)- \check{L}_{m,s,w}(\eta_m'Y) \quad<\quad
\check{\sigma} (\Phi^{-2}(\alpha_2) -\Phi^{-2}(\alpha_1))
\sqrt{ 1+\frac{\sigma^2}{\tau^2}}.
$$
Also, the relations in the preceding two displays continue to hold
if the conditioning on $(I_n-P_{\eta_m})(Y+\omega)=w$ is dropped in the first
display and if $w$ is replaced by $W = (I_n-P_{\eta_m})(Y+\omega)$ in both.

Now consider a model $m$ with $\mathbb P(\hat{m}(Y)) > 0$.
If $m=\emptyset$, then the event $\{\hat{m}(Y)=m\}$ is
again a polyhedron in $Y$-space; cf. \cite{lee2016}.
If $m\neq \emptyset$, then the event $\{\hat{m}(Y)=m\}$
can be decomposed into the (disjoint) union of
events of the form $\{\hat{m}(Y)=m,\hat{s}(Y)=s\}$,
each with positive probability;
therefore, the event $\{\hat{m}(Y)=m\}$
is the (disjoint) union of finitely many polyhedra in $Y$-space.
This entails that the conditional distribution of $\eta_m'Y$
given $\hat{m}(Y+\omega) = m$  and $(I_n-P_{\eta_m})(Y+\omega) = w$
is again of the form \eqref{cond_rv} with $Z$ and $R$ as in the
preceding paragraph, where $T = T_m(w)$ is now the union
of finitely many intervals. Now, proceeding as in the 
preceding paragraph, we obtain a confidence interval
$[\check{L}_{m,w}(\eta_m'Y), \check{U}_{m,w}(\eta_m'Y)]$
that satisfies
$$
\mathbb P( \eta_m'\theta \in [\check{L}_{m,w}(\eta_m'Y), \check{U}_{m,w}(\eta_m'Y)]
\,|\, \hat{m}(Y+\omega) = m, 
(I_n -P_{\eta_m})(Y+\omega) = w) \quad=\quad \alpha_2 - \alpha_1
$$
by construction and
$$
\check{U}_{m,w}(\eta_m'Y)- \check{L}_{m,w}(\eta_m'Y) \quad<\quad
\check{\sigma} (\Phi^{-2}(\alpha_2) -\Phi^{-2}(\alpha_1))
\sqrt{ 1+\frac{\sigma^2}{\tau^2}}
$$
by Theorem~\ref{th_core}.
As before, the relations in the preceding two displays continue to hold
if the conditioning on $(I_n-P_{\eta_m})(Y+\omega)=w$ is dropped in the first
display and if $w$ is replaced by $W = (I_n-P_{\eta_m})(Y+\omega)$ in both.

\section{Discussion} \label{sec_discussion}

We have shown that the length of certain confidence intervals
with conditional coverage guarantee can be drastically shortened
by adding some noise to the data throughout the model-selection
step. Examples include
data carving and selection on a randomized response, both combined
with the polyhedral method. 
Our findings clearly support the observations of
\cite{fithian2017} and \cite{tian2018} that sacrificing some
power in the model-selection step results
in an significant increase in power in subsequent inferences.
Selection and inference on the same data is not favorable in the case where
the events describing the outcome of the selection step correspond to
bounded regions in sample space (in our case, the truncation set $T$),
because then the resulting confidence set has infinite expected length;
cf. \cite{kivaranovic2020}.
There are, however, situations where this case can not occur:
For example, \cite{heller2019} study a situation where first a 
global hypothesis is tested against a two-sided alternative and subsequent 
tests are only performed if the global hypothesis is rejected.
There, bounded 
selection regions do not arise and excessively long intervals are not an issue.
Hence, we recommend to be cautious about the selection procedure 
one chooses. In some situations, adding noise in the selection step 
(e.g., through data carving or randomized selection)
may be beneficial; in other situations, it may not be necessary.

\begin{appendices}

\section{Proof of Theorem~\ref{th_core}}

We first provide some 
intuition behind the theorem. Second, we state Proposition~\ref{prop_core1} 
and~\ref{prop_core2} which are the two core results which the proof of 
Theorem~\ref{th_core} relies on. Finally, we prove Theorem~\ref{th_core} 
with the help of 
these two propositions. In Section~\ref{sec_prop_core1} 
and~\ref{sec_prop_core2} we then prove  Proposition~\ref{prop_core1} 
and~\ref{prop_core2}, respectively. The first of these two propositions is 
considerably more difficult to prove. In Section~\ref{sec_aux} we 
collect several auxiliary results which are required for the proofs of the 
main results.

Throughout this section, fix $\sigma^2$ and $\tau^2$ in $(0,\infty)$, and
simplify notation by setting $F_\mu (z) = F_{\mu,\sigma^2}^{\tau^2}(z)$ and
$f_\mu (z) = f_{\mu,\sigma^2}^{\tau^2}(z)$,
where $F_{\mu,\sigma^2}^{\tau^2}(z)$ and  $f_{\mu,\sigma^2}^{\tau^2}(z)$ 
denote the conditional c.d.f. and the conditional probability density
function (p.d.f.), respectively,
of the random variable in \eqref{cond_rv}. 
Set $\rho^2 = \tau^2 / (\sigma^2+\tau^2)$, and recall that
$\mu_\alpha(z)$ is defined by 
\eqref{eq_bound}. Denote by $\phi(t)$ and $\Phi(t)$ 
the p.d.f. and c.d.f. of the standard normal distribution with the 
usual convention that 
$\phi(-\infty)=\phi(\infty)=\Phi(-\infty)=1-\Phi(\infty)=0$.

Observe that the random vector $(Z, Z+R)'$ has a two-dimensional normal 
distribution with mean $(\mu,\mu)'$, variance $(\sigma^2,\sigma^2+\tau^2)'$ and 
covariance $\sigma^2$. It is elementary to verify that, for any 
$v \in \mathbb R$,
\begin{equation} \label{rv_cond_v}
    Z ~ | ~ Z + R = v ~ \sim ~ N(\rho^2\mu + (1-\rho^2)v, \sigma^2\rho^2).
\end{equation}
Let $G_\mu(z, v)$ denote the c.d.f. of this normal distribution, i.e.,
\begin{equation} \label{cdf_normal}
    G_\mu(z,v) = \Phi\left(\frac{z - (\rho^2\mu + (1-\rho^2)v)}{\sigma\rho} 
\right).
\end{equation}
By definition of $F_\mu(z)$, we have
\begin{equation} \label{eq_cdf_cond}
    F_\mu(z) = \mathbb E\left[G_\mu(z, V_\mu) \right],
\end{equation}
where $V_\mu$ is a random variable that is truncated normally distributed with 
mean $\mu$, variance $\sigma^2+\tau^2$ and truncation set $T$. 

Assume, for this paragraph, that $T$ is the singleton set $T = \{v\}$
for some fixed $v\in\mathbb R$ (singleton truncation sets are
excluded by our definition of $T$ in \eqref{trunc_set}).
Then the c.d.f.s $G_\mu(z,v)$ and $F_\mu(z)$ coincide, and it is
elementary to verify that the length of 
$[\mu_{\alpha_1}(z),\mu_{\alpha_2}(z)]$ is equal to $(\sigma/\rho)(\Phi^{-1}(\alpha_2) 
-\Phi^{-1}(\alpha_1))$, which is exactly the upper bound in 
Theorem~\ref{th_core}. The theorem thus implies that confidence 
intervals only become shorter if one conditions on a set $T$ with positive 
Lebesgue measure instead of a singleton. On the other hand, if $T$ is equal to 
$\mathbb R$, it is clear that the length of $[\mu_{\alpha_1}(z),\mu_{\alpha_2}(z)]$ is 
equal to $\sigma(\Phi^{-1}(\alpha_2)-\Phi^{-1}(\alpha_1))$. Surprisingly, this 
latter quantity is not 
necessarily a lower bound 
for the length of $[\mu_{\alpha_1}(z),\mu_{\alpha_2}(z)]$ if $T$ is
a proper subset of $\mathbb R$; cf.
the r.h.s. of Figure~\ref{fig_length}, or in Figure 1 of \cite{kivaranovic2020} in the case where $\tau$ is equal to $0$. 

\begin{proposition} \label{prop_core1}
  For all $z \in \mathbb R$ and all $\mu \in \mathbb R$, we have 
  \begin{equation} \label{ineq_dmu}
    \frac{\partial \Phi^{-1}(F_\mu(z))}{\partial \mu} ~ < ~ 
-\frac{\rho}{\sigma}.
  \end{equation}
\end{proposition}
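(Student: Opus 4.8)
The plan is to recast \eqref{ineq_dmu} as a differential inequality for $F_\mu(x)$ and then reduce it to a strict Jensen inequality for a concave function built from $\phi$ and $\Phi^{-1}$. Since $F_\mu(x)\in(0,1)$ and $\mu\mapsto F_\mu(x)$ is smooth — it is the expectation in \eqref{eq_cdf_cond} of the bounded, smooth integrand $G_\mu(x,\cdot)$ against a truncated Gaussian density, so differentiation under the integral sign is permissible — the chain rule gives $\partial_\mu\Phi^{-1}(F_\mu(x)) = \partial_\mu F_\mu(x)/\phi(\Phi^{-1}(F_\mu(x)))$. As $\phi>0$, \eqref{ineq_dmu} is equivalent to $\partial_\mu F_\mu(x) < -(\rho/\sigma)\,\phi(\Phi^{-1}(F_\mu(x)))$. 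I set $\omega^2=\sigma^2+\tau^2$, abbreviate the random inner c.d.f.\ by $P=G_\mu(x,V_\mu)$ (so that $F_\mu(x)=\mathbb E[P]$ by \eqref{eq_cdf_cond}), and introduce $\kappa(p)=\phi(\Phi^{-1}(p))$ for $p\in(0,1)$, so that $\phi(\Phi^{-1}(F_\mu(x)))=\kappa(\mathbb E[P])$.

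Next I would compute $\partial_\mu F_\mu(x)$, carefully separating the two ways in which $\mu$ enters \eqref{eq_cdf_cond}. Differentiating $G_\mu(x,v)$ in \eqref{cdf_normal} explicitly gives $\partial_\mu G_\mu(x,v) = -(\rho/\sigma)\kappa(G_\mu(x,v))$, contributing $-(\rho/\sigma)\mathbb E[\kappa(P)]$. The dependence through the law of $V_\mu$ is handled via its score $\partial_\mu\log p_\mu(v)=(v-\mathbb E[V_\mu])/\omega^2$ (the normalizing constant produces the $\mathbb E[V_\mu]$), contributing $\omega^{-2}\operatorname{Cov}(P,V_\mu)$. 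Since $\Phi^{-1}(P)=(x-\rho^2\mu-(1-\rho^2)V_\mu)/(\sigma\rho)$ is affine in $V_\mu$ with slope $-\sigma/(\omega^2\rho)$, one gets $\operatorname{Cov}(P,V_\mu)=-(\omega^2\rho/\sigma)\operatorname{Cov}(P,\Phi^{-1}(P))$, and hence the clean representation
\[ \partial_\mu F_\mu(x) = -\frac{\rho}{\sigma}\Big(\mathbb E[\kappa(P)]+\operatorname{Cov}(P,\Phi^{-1}(P))\Big). \]
Substituting into the reduced inequality, \eqref{ineq_dmu} becomes the purely distributional claim $\mathbb E[\kappa(P)]+\operatorname{Cov}(P,\Phi^{-1}(P))>\kappa(\mathbb E[P])$.

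To finish, I would use the two identities $\kappa'(p)=-\Phi^{-1}(p)$ and $\kappa''(p)=-1/\kappa(p)<0$, so that $\kappa$ is strictly concave on $(0,1)$. The supporting-line inequality at the random point $P$, evaluated at $\bar p:=\mathbb E[P]$, reads $\kappa(\bar p)\le\kappa(P)+\kappa'(P)(\bar p-P)$ pointwise, with strict inequality whenever $P\neq\bar p$. Taking expectations and using $\mathbb E[\kappa'(P)(\bar p-P)]=-\operatorname{Cov}(\kappa'(P),P)=\operatorname{Cov}(\Phi^{-1}(P),P)$ yields exactly the distributional claim; strictness holds because $P$ is non-degenerate — it is a strictly monotone transform of $V_\mu$, and $V_\mu$ is non-degenerate as the truncation set $T$ in \eqref{trunc_set} has positive Lebesgue measure. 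Integrability of the two terms is immediate, since $\kappa(P)$ is bounded and $\Phi^{-1}(P)$ is affine in the sub-Gaussian $V_\mu$.

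I expect the main obstacle to be the derivative computation: correctly accounting for both the explicit $\mu$-dependence of $G_\mu$ and the $\mu$-dependence of the truncated law of $V_\mu$, justifying the interchange of differentiation and integration when $T$ is unbounded, and massaging the result into the covariance form above. Once that representation is in hand, the strict concavity of $\kappa$ — which hinges on the special relation $\kappa''=-1/\kappa$ — reduces the remaining step to a one-line strict Jensen argument.
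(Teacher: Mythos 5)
Your proposal is correct, and it takes a genuinely different route from the paper --- notably, the paper itself remarks that it ``did not find a direct proof'' of \eqref{ineq_dmu}, whereas yours is exactly such a direct proof. The paper first establishes the companion bound $\partial_x \Phi^{-1}(F_\mu(x)) < 1/(\sigma\rho)$ via Jensen's inequality (Lemma \ref{le_dx}), then derives an explicit formula for $\partial_\mu \Phi^{-1}(F_\mu(x))$ involving the boundary terms $h(a_i)$, $h(b_i)$ and $G_\mu(x,a_i)$, $G_\mu(x,b_i)$ (Lemma \ref{le_dmu_formula}), and finally shows the deficit function $B_\mu(F_\mu^{-1}(q))$ is strictly convex in $q$ and vanishes at $q\in\{0,1\}$, hence is negative. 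You instead differentiate $F_\mu(x)=\mathbb E[G_\mu(x,V_\mu)]$ directly, splitting the $\mu$-dependence into the explicit part of $G_\mu$ (giving $-(\rho/\sigma)\,\mathbb E[\kappa(P)]$ with $\kappa=\phi\circ\Phi^{-1}$) and the score of the truncated-normal law of $V_\mu$ (giving $\omega^{-2}\operatorname{Cov}(P,V_\mu)$, which the affine relation between $\Phi^{-1}(P)$ and $V_\mu$ converts into $-(\rho/\sigma)\operatorname{Cov}(P,\Phi^{-1}(P))$); the claim then reduces to $\mathbb E[\kappa(P)]+\operatorname{Cov}(P,\Phi^{-1}(P))>\kappa(\mathbb E[P])$, which is precisely the tangent-line form of the strict concavity of $\kappa$ taken at the random point $P$ and evaluated at $\mathbb E[P]$ (strict because $P$ is non-degenerate, $T$ having positive Lebesgue measure). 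I verified your representation of $\partial_\mu F_\mu(x)$ against the untruncated case $T=\mathbb R$ via Stein's lemma and it checks out. Both arguments ultimately hinge on the same structural identity $\kappa''=-1/\kappa<0$, but yours buys a substantially shorter and more transparent proof that dispenses with the explicit density \eqref{eq_pdf_v2}, the boundary-term bookkeeping, and the convexity-plus-limits argument for $B_\mu$; the paper's route has the side benefit of producing the closed-form derivative in Lemma \ref{le_dmu_formula}, which is of independent use. The only point you should spell out in a final write-up is the justification for differentiating under the integral sign in the score computation (routine here, since the truncated normal family in $\mu$ with fixed $T$ is a one-parameter exponential family and $G_\mu(x,\cdot)$ is bounded), which you correctly flag as the main technical care point.
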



\begin{proposition} \label{prop_core2}
Let $G_\mu(z,v)$ be defined as in \eqref{cdf_normal} and let $\alpha \in (0,1)$. 
If $\sup T = b_k < \infty$, then
    \begin{equation*}
        \lim_{z \to \infty} G_{\mu_{\alpha}(z)}(z,b_k) = 1-\alpha.
    \end{equation*}
    If $\inf T = a_1 > -\infty$, then
    \begin{equation*}
        \lim_{z \to -\infty} G_{\mu_{\alpha}(z)}(z,a_1) = 1-\alpha.
    \end{equation*}
\end{proposition}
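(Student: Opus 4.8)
The plan is a concentration argument. I would show that as $x\to\infty$ the level $\mu_q(x)$ defined by \eqref{eq_bound} is forced to $+\infty$, which pushes the truncated-normal mixing variable $V_\mu$ in \eqref{eq_cdf_cond} onto the upper endpoint $b_k$; the mixture $F_{\mu_q(x)}(x)=\mathbb E[G_{\mu_q(x)}(x,V_{\mu_q(x)})]$ then collapses onto its one-point value $G_{\mu_q(x)}(x,b_k)$. The case $\inf T=a_1>-\infty$, $x\to-\infty$ is entirely symmetric, so I would only write out the first. To begin, note that $1-\rho^2=\sigma^2/(\sigma^2+\tau^2)>0$, so by \eqref{cdf_normal} the map $v\mapsto G_\mu(x,v)$ is strictly decreasing; since $V_\mu\le b_k$ almost surely, \eqref{eq_cdf_cond} gives $G_\mu(x,b_k)\le F_\mu(x)$ for every $\mu$, and at $\mu=\mu_q(x)$ this reads $G_{\mu_q(x)}(x,b_k)\le 1-q$. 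It therefore remains to show that the nonnegative gap $D(x):=F_{\mu_q(x)}(x)-G_{\mu_q(x)}(x,b_k)=1-q-G_{\mu_q(x)}(x,b_k)$ tends to $0$.

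Next I would establish $\mu_q(x)\to\infty$. By Proposition \ref{prop_core1} the map $\mu\mapsto F_\mu(x)$ is strictly decreasing, while for fixed $\mu$ the c.d.f. $F_\mu(\cdot)$ satisfies $F_\mu(x)\to1$ as $x\to\infty$ (dominated convergence in \eqref{eq_cdf_cond}). If $\mu_q(x)$ did not diverge, some sequence $x_n\to\infty$ would satisfy $\mu_q(x_n)\le M$; monotonicity in $\mu$ would then force $1-q=F_{\mu_q(x_n)}(x_n)\ge F_M(x_n)\to1$, a contradiction. The core technical ingredient is that the truncated normal $V_\mu$ (mean $\mu$, variance $\sigma^2+\tau^2$, support $T$ with $\sup T=b_k$) concentrates at $b_k$ as $\mu\to\infty$, i.e. $\mathbb P(V_\mu<b_k-\epsilon)\to0$ for every $\epsilon>0$. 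I would prove this directly: $\mathbb P(V_\mu<b_k-\epsilon)$ is bounded by the ratio of the Gaussian mass on $(-\infty,b_k-\epsilon)$ to that on the sliver $(b_k-\epsilon,b_k)\subseteq(a_k,b_k)$, and Mills-ratio asymptotics show this ratio decays exponentially in $\mu$. This could instead be packaged as an auxiliary lemma in Section \ref{sec_aux}.

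Finally I would combine these. Fix $\epsilon>0$ small enough that $(b_k-\epsilon,b_k)\subseteq T$, and split $D(x)=\mathbb E[G_{\mu_q(x)}(x,V_{\mu_q(x)})-G_{\mu_q(x)}(x,b_k)]$ over the events $\{V_{\mu_q(x)}\ge b_k-\epsilon\}$ and its complement. On the first event the two arguments of $\Phi$ in \eqref{cdf_normal} differ by at most $(1-\rho^2)\epsilon/(\sigma\rho)$, so the integrand is at most $(1-\rho^2)\epsilon/(\sigma\rho\sqrt{2\pi})$ by the mean value theorem together with $\sup\phi=1/\sqrt{2\pi}$; on the complement the integrand is at most $1$, but that event has probability $\mathbb P(V_{\mu_q(x)}<b_k-\epsilon)\to0$ since $\mu_q(x)\to\infty$. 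Hence $\limsup_{x\to\infty}D(x)\le(1-\rho^2)\epsilon/(\sigma\rho\sqrt{2\pi})$, and letting $\epsilon\downarrow0$ yields $D(x)\to0$, i.e. $G_{\mu_q(x)}(x,b_k)\to1-q$. I expect the concentration estimate for $V_\mu$ to be the most laborious step, but the decisive simplification is that the uniform bound $\sup\phi<\infty$ controls $G_\mu(x,V_\mu)-G_\mu(x,b_k)$ on $\{V_\mu\ge b_k-\epsilon\}$ independently of where $(x-\rho^2\mu_q(x))/(\sigma\rho)$ sits, so I never have to track the joint divergence of $x$ and $\mu_q(x)$.
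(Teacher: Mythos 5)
Your proposal is correct and follows essentially the same route as the paper: both split the expectation $F_{\mu_q(x)}(x)=\mathbb E[G_{\mu_q(x)}(x,V_{\mu_q(x)})]$ over $\{V_{\mu_q(x)}>b_k-\epsilon\}$ and its complement, use $\mu_q(x)\to\infty$ together with the concentration of $V_\mu$ at $b_k$, and close the $\epsilon$-gap via the (uniform/Lipschitz) continuity of the normal c.d.f. You merely supply explicit arguments for two steps the paper asserts without proof (the divergence of $\mu_q(x)$ and the Mills-ratio concentration estimate), which is a welcome but not substantively different elaboration.
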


This proposition entails 
that $G_{\mu_{\alpha}(z)}(z,b_k)$  converges to
$F_{\mu_{\alpha}(z)}(z)$ 
as $z \to \infty$ if the truncation set $T$ is 
bounded from above, and 
the same is true with $a_1$ replacing $b_k$  and as
$z\to-\infty$ if $T$ is bounded from below.
We continue now with the proof of Theorem~\ref{th_core}.

\begin{proof}[Proof of Theorem~\ref{th_core}]
By definition of $\mu_{\alpha_1}(z)$, we have 
$\Phi^{-1}(F_{\mu_{\alpha_1}(z)}(z)) = \Phi^{-1}(1-\alpha_1)$. 
Because Proposition~\ref{prop_core1} holds for any $z$ and 
$\mu$, it follows that for any $c \in (0,\infty)$, we have
  \begin{equation*}
    \Phi^{-1}(F_{\mu_{\alpha_1}(z) + c}(z)) < \Phi^{-1}(1-\alpha_1) 
    - c \rho /\sigma.
  \end{equation*}
We plug $c=\sigma(\Phi^{-1}(\alpha_2) - \Phi^{-1}(\alpha_1))/\rho$ 
into the inequality, 
apply the strictly increasing function $\Phi$ to both sides and use the 
symmetry $\Phi^{-1}(\alpha) = -\Phi^{-1}(1-\alpha)$ to obtain
  \begin{equation*}
    F_{\mu_{\alpha_1}(z) + \sigma(\Phi^{-1}(\alpha_2) - 
    \Phi^{-1}(\alpha_1))/\rho}(z) < 1-\alpha_2.
  \end{equation*}
  Because $F_\mu(z)$ is strictly decreasing in $\mu$ by Lemma~\ref{le_cdf} and 
$\mu_{\alpha_2}(z)$ satisfies the equation $F_{\mu_{\alpha_2}(z)}(z) = 1-\alpha_2$ , the 
previous inequality implies that
  \begin{equation*}
    \mu_{\alpha_2}(z) < \mu_{\alpha_1}(z) + \frac{\sigma}{\rho} \left(\Phi^{-1}(\alpha_2) - 
\Phi^{-1}(\alpha_1) \right).
  \end{equation*}
  Subtracting $\mu_{\alpha_1}(z)$ on both sides gives the inequality of the theorem. 
It remains to show that this upper bound is tight if the truncation set $T$ is 
bounded. We only consider the case $\sup T = b_k < \infty$ here, because
the case $\inf T = a_1 > -\infty$ can be treated by similar arguments, 
mutatis mutandis.
In view of the definition of 
$G_\mu(z,v)$ in \eqref{cdf_normal}, Proposition~\ref{prop_core2} and the 
symmetry $\Phi^{-1}(\alpha) = -\Phi^{-1}(1-\alpha)$ imply that
  \begin{equation*}
    \lim_{z \to \infty} \frac{z - (\rho^2 \mu_{\alpha_1}(z) + 
(1-\rho^2)b_k)}{\sigma\rho} = -\Phi^{-1}(\alpha_1)
  \end{equation*}
  and
  \begin{equation*} 
  \lim_{z \to \infty} \frac{z - (\rho^2 \mu_{\alpha_2}(z) + 
(1-\rho^2)b_k)}{\sigma\rho} = -\Phi^{-1}(\alpha_2).
  \end{equation*}
  Subtracting the second limit from the first and multiplying by 
$\sigma/\rho$ gives
    \begin{equation*}
      \lim_{z\to\infty} \mu_{\alpha_2}(z) - \mu_{\alpha_1}(z) = \frac{\sigma}{\rho} 
(\Phi^{-1}(\alpha_2)-\Phi^{-1}(\alpha_1)).
  \end{equation*}
  Hence the upper bound is tight.
\end{proof}

\subsection{Proof of Proposition~\ref{prop_core1}} \label{sec_prop_core1}
The proof of the Proposition is split up into a sequence of lemmas that are 
directly proven here.
\begin{lemma} \label{le_dx}
    For all $z \in \mathbb R$ and all $\mu \in \mathbb R$, we have 
    \begin{equation} \label{ineq_dx}
        \frac{\partial \Phi^{-1}(F_\mu(z))}{\partial z} ~ < ~ 
\frac{1}{\sigma\rho}.
    \end{equation}
\end{lemma}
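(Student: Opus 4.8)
The plan is to apply the chain rule to the left-hand side and reduce the claim to a Jensen-type inequality. Writing $h(u) = \phi(\Phi^{-1}(u))$ for $u \in (0,1)$, the chain rule gives
$$\frac{\partial \Phi^{-1}(F_\mu(x))}{\partial x} ~=~ \frac{\partial_x F_\mu(x)}{\phi(\Phi^{-1}(F_\mu(x)))} ~=~ \frac{f_\mu(x)}{h(F_\mu(x))},$$
so the lemma is equivalent to the inequality $\sigma\rho\, f_\mu(x) < h(F_\mu(x))$. A short computation shows that $h$ is strictly concave on $(0,1)$: substituting $u = \Phi(w)$ yields $h'(u) = -\Phi^{-1}(u)$ and hence $h''(u) = -1/h(u) < 0$. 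This concavity is the crux of the argument.

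Next I would rewrite both sides of the target inequality as expectations over the truncated normal variable $V_\mu$. By \eqref{eq_cdf_cond}, $F_\mu(x) = \mathbb E[G_\mu(x, V_\mu)]$, and differentiating under the expectation (justified by dominated convergence, as the integrand and its $x$-derivative are smooth and bounded) gives $f_\mu(x) = \mathbb E[\partial_x G_\mu(x, V_\mu)]$. Setting
$$W ~=~ \frac{x - (\rho^2\mu + (1-\rho^2)V_\mu)}{\sigma\rho},$$
the definition \eqref{cdf_normal} of $G_\mu$ gives $G_\mu(x, V_\mu) = \Phi(W)$ and $\sigma\rho\,\partial_x G_\mu(x, V_\mu) = \phi(W)$. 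Using $\phi(W) = \phi(\Phi^{-1}(\Phi(W))) = h(\Phi(W))$, the target inequality $\sigma\rho\, f_\mu(x) < h(F_\mu(x))$ becomes
$$\mathbb E\big[h(\Phi(W))\big] ~<~ h\big(\mathbb E[\Phi(W)]\big).$$

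This is exactly Jensen's inequality applied to the strictly concave function $h$ and the random variable $Y = \Phi(W)$, which takes values in $(0,1)$. To obtain the \emph{strict} inequality, I would verify that $Y$ is non-degenerate: since $\rho^2 = \tau^2/(\sigma^2+\tau^2) < 1$, the coefficient $-(1-\rho^2)/(\sigma\rho)$ of $V_\mu$ in $W$ is nonzero, and $V_\mu$ is genuinely random because $T$ is a union of nonempty open intervals and thus has positive Lebesgue measure; as $\Phi$ is strictly increasing, $Y = \Phi(W)$ is non-degenerate, so Jensen's inequality is strict. This yields the desired bound and completes the proof.

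The main obstacle is the first step: recognizing that the quotient produced by the chain rule collapses, via the identity $\phi(w) = h(\Phi(w))$, into a comparison of $\mathbb E[h(Y)]$ with $h(\mathbb E[Y])$ for the specific concave function $h(u) = \phi(\Phi^{-1}(u))$. Once this structure is identified, the remaining work --- verifying the concavity of $h$ and the non-degeneracy of $Y$ --- is routine.
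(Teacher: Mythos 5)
Your proof is correct and follows essentially the same route as the paper's: chain rule plus inverse function theorem, the representation $f_\mu(x)=\frac{1}{\sigma\rho}\mathbb E[\phi(\Phi^{-1}(G_\mu(x,V_\mu)))]$, strict concavity of $\phi\circ\Phi^{-1}$, and Jensen's inequality. Your explicit verification that $\Phi(W)$ is non-degenerate (so that Jensen is strict) is a detail the paper leaves implicit, but otherwise the two arguments coincide.
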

\begin{proof}
    By the inverse function theorem, we have $\partial \Phi^{-1}(\alpha) / \partial 
\alpha = 1/\phi(\Phi^{-1}(\alpha))$. This equation and the chain rule imply that
\begin{equation*}
    \frac{\partial \Phi^{-1}(F_\mu(z))}{\partial z} = 
\frac{f_\mu(z)}{\phi(\Phi^{-1}(F_\mu(z)))}.
\end{equation*}
It suffices to show that the numerator on the r.h.s. is less than
$\phi(\Phi^{-1}(F_\mu(z))) / (\sigma\rho)$. 

In view of 
\eqref{cdf_normal}--\eqref{eq_cdf_cond}, Leibniz's rule 
implies that the p.d.f. $f_\mu(z)$ is equal to
\begin{equation*}
    f_\mu(z) ~ = ~ \frac{1}{\sigma\rho} \mathbb E\left[\phi\left( 
\frac{z-(\rho^2\mu + (1-\rho^2)V_\mu)}{\sigma\rho}\right) \right] ~ = 
\frac{1}{\sigma\rho} \mathbb 
E\left[\phi\left(\Phi^{-1}(G_\mu(z,V_\mu))\right)\right].
\end{equation*}
Observe now that it is sufficient to show that the function 
$\phi(\Phi^{-1}(\alpha))$ is strictly concave because, by Jensen's inequality, it 
follows then that $f_\mu(z)$ is bounded from above by
\begin{equation*}
    \frac{1}{\sigma\rho} \phi\left(\Phi^{-1}\left(\mathbb 
E\left[G_\mu(z,V_\mu)\right]\right)\right) ~ = ~ 
\frac{1}{\sigma\rho}\phi(\Phi^{-1}(F_\mu(z))),
\end{equation*}
completing the proof.

Elementary calculus shows that 
\begin{equation*}
    \frac{ \partial^2 \phi(\Phi^{-1}(\alpha))}{\partial \alpha^2} = 
\frac{-1}{\phi(\Phi^{-1}(\alpha))}.
\end{equation*}
Because $\phi(\Phi^{-1}(\alpha))$ is positive for all $\alpha \in (0,1)$, it follows that 
the second derivative is negative for all $\alpha \in (0,1)$. This means, 
$\phi(\Phi^{-1}(\alpha))$ is strictly concave and the proof is complete.
\end{proof}
Note that the inequality \eqref{ineq_dx} of this lemma resembles inequality 
\eqref{ineq_dmu} of Proposition~\ref{prop_core1}. While inequality 
\eqref{ineq_dx} is surprisingly easy to prove, inequality \eqref{ineq_dmu} is 
more difficult. Equation \eqref{eq_cdf_cond} provides intuition why this is the 
case: The distribution of the random variable $V_\mu$ does not depend on $z$ 
but it depends on $\mu$. Hence to prove inequality \eqref{ineq_dmu}, we cannot 
exchange integral and differential and we cannot apply Jensen's inequality as 
we did in the proof of Lemma~\ref{le_dx}. We did not find a direct proof of 
inequality \eqref{ineq_dmu}. However, in the following we show that inequality 
\eqref{ineq_dx} in fact implies \eqref{ineq_dmu}.

To show this implication, we need a more explicit representation of $f_\mu(z)$. 
Elementary calculus and properties of the conditional normal distribution imply 
that the conditional p.d.f. $f_\mu(z)$ can also be written as
\begin{equation} \label{eq_pdf_v2}
  f_\mu(z) ~ = ~ 
\frac{1}{\sigma}\phi\left(\frac{z-\mu}{\sigma}\right)\frac{\sum_{i=1}^k 
\Phi\left(\frac{b_i-z}{\tau}\right) - \Phi\left(\frac{a_i-z}{\tau}\right) 
}{\sum_{i=1}^k \Phi\left(\frac{b_i - \mu}{\sqrt{\sigma^2+\tau^2}}\right) - 
\Phi\left(\frac{a_i - \mu}{\sqrt{\sigma^2+\tau^2}}\right)}.
\end{equation}

\begin{lemma} \label{le_dmu_formula}
    Let $G_\mu(z, v)$ be defined as in \eqref{cdf_normal}. For all $z \in 
\mathbb R$ and all $\mu \in \mathbb R$, we have 
  \begin{equation*}
    \frac{\partial \Phi^{-1}(F_\mu(z))}{\partial \mu} = \frac{-f_\mu(z) + 
\sum_{i=1}^k 
h(b_i)(F_\mu(z)-G_\mu(z,b_i))-h(a_i)(F_\mu(z)-G_\mu(z,a_i))}{
 \phi(\Phi^{-1}(F_\mu(z)))},
  \end{equation*}
  where 
  \begin{equation*}
    h(v) = \frac{\frac{1}{\sqrt{\sigma^2+\tau^2}}\phi\left(\frac{v - 
\mu}{\sqrt{\sigma^2+\tau^2}}\right)}{\sum_{i=1}^k \Phi\left(\frac{b_i - 
\mu}{\sqrt{\sigma^2+\tau^2}}\right) - \Phi\left(\frac{a_i - 
\mu}{\sqrt{\sigma^2+\tau^2}}\right)}.
  \end{equation*}
\end{lemma}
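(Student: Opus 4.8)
The plan is to reduce the claim to the computation of $\partial_\mu F_\mu(x)$ and then read off the stated formula. By the inverse function theorem and the chain rule,
\begin{equation*}
\frac{\partial \Phi^{-1}(F_\mu(x))}{\partial\mu} ~=~ \frac{1}{\phi(\Phi^{-1}(F_\mu(x)))}\,\frac{\partial F_\mu(x)}{\partial\mu},
\end{equation*}
so it suffices to show that $\partial_\mu F_\mu(x)$ equals the numerator on the right-hand side of the asserted identity. To that end I would make \eqref{eq_cdf_cond} explicit as an integral. Writing $p_\mu(v) = \frac{1}{\sqrt{\sigma^2+\tau^2}}\phi\big(\frac{v-\mu}{\sqrt{\sigma^2+\tau^2}}\big)$ for the untruncated $N(\mu,\sigma^2+\tau^2)$-density and $C(\mu) = \int_T p_\mu(v)\,dv$ for the probability of the truncation set (so that $h(v) = p_\mu(v)/C(\mu)$), equation \eqref{eq_cdf_cond} reads $F_\mu(x) = C(\mu)^{-1}\int_T G_\mu(x,v)\,p_\mu(v)\,dv$.

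Differentiating this ratio with respect to $\mu$ (the interchange of differentiation and integration being justified by the usual domination argument, valid thanks to the Gaussian tails) produces three contributions: the $\mu$-derivative of $G_\mu$, the $\mu$-derivative of $p_\mu$, and the $\mu$-derivative of the normalizer $C(\mu)$. The first is straightforward: since $\partial_\mu G_\mu(x,v) = -(\rho/\sigma)\,\phi(\Phi^{-1}(G_\mu(x,v)))$ by \eqref{cdf_normal}, the representation of $f_\mu$ obtained in the proof of Lemma \ref{le_dx} gives $C(\mu)^{-1}\int_T \partial_\mu G_\mu(x,v)\,p_\mu(v)\,dv = -\rho^2 f_\mu(x)$. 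The decisive observation for the remaining two contributions is that $p_\mu$ depends on $v$ and $\mu$ only through $v-\mu$, so that $\partial_\mu p_\mu(v) = -\partial_v p_\mu(v)$. This lets me integrate by parts on each interval $(a_i,b_i)$: the term $C(\mu)^{-1}\int_T G_\mu(x,v)\,\partial_\mu p_\mu(v)\,dv$ splits into boundary terms $-C(\mu)^{-1}\sum_i[G_\mu(x,b_i)p_\mu(b_i)-G_\mu(x,a_i)p_\mu(a_i)]$ plus an interior integral. Because $\partial_v G_\mu = \frac{1-\rho^2}{\rho^2}\,\partial_\mu G_\mu$, this interior integral again reduces to a multiple of $f_\mu(x)$, contributing $-(1-\rho^2)f_\mu(x)$. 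Finally, $C'(\mu) = \int_T \partial_\mu p_\mu\,dv = -\sum_i[p_\mu(b_i)-p_\mu(a_i)]$ by the same trick, so the normalizer term contributes $C(\mu)^{-1}\sum_i[p_\mu(b_i)-p_\mu(a_i)]F_\mu(x)$.

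Collecting terms, the two multiples of $f_\mu(x)$ add to $-\rho^2 f_\mu(x)-(1-\rho^2)f_\mu(x) = -f_\mu(x)$, while the boundary terms and the normalizer term regroup, after factoring $h(v)=p_\mu(v)/C(\mu)$, into exactly $\sum_i h(b_i)(F_\mu(x)-G_\mu(x,b_i)) - h(a_i)(F_\mu(x)-G_\mu(x,a_i))$; dividing by $\phi(\Phi^{-1}(F_\mu(x)))$ then yields the lemma. I expect the main obstacle to be purely organizational: correctly bookkeeping the signs and the factors $\rho^2$ versus $1-\rho^2$ across the three contributions, and making sure the boundary terms behave at infinity when $a_1=-\infty$ or $b_k=\infty$. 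In those cases $p_\mu$ vanishes at the endpoint by the convention $\phi(\pm\infty)=0$, so the corresponding summand drops out, consistently with $h(a_1)=0$ or $h(b_k)=0$. Everything else is routine Gaussian calculus.
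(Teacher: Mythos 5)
Your argument is correct, and it takes a genuinely different route from the paper's. The paper also begins with the reduction to $\partial_\mu F_\mu(x)$ via the inverse function theorem, but then works through the density: it computes $\partial_\mu f_\mu(x)$ from the explicit formula \eqref{eq_pdf_v2} (Lemma \ref{le_dens_dmu}) and integrates the result over $(-\infty,x]$, which requires a closed-form antiderivative of $u\mapsto \frac{u-\mu}{\sigma^2}f_\mu(u)$ taken from Owen's integral table (Equation 10,011.1 of \citealp{owen1980}, Lemma \ref{le_int_f1}). You instead differentiate the mixture representation \eqref{eq_cdf_cond} directly in $\mu$, split the derivative into the $G_\mu$-term, the $p_\mu$-term and the normalizer term, and exploit the translation identity $\partial_\mu p_\mu = -\partial_v p_\mu$ together with integration by parts in $v$ and the proportionality $\partial_v G_\mu = \frac{1-\rho^2}{\rho^2}\partial_\mu G_\mu$. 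I checked the three contributions: $-\rho^2 f_\mu(x)$ from the $G_\mu$-term (using $f_\mu(x)=\frac{1}{\sigma\rho}\mathbb E[\phi(\Phi^{-1}(G_\mu(x,V_\mu)))]$ as in the proof of Lemma \ref{le_dx}), $-(1-\rho^2)f_\mu(x)$ plus the boundary terms $-\sum_i[h(b_i)G_\mu(x,b_i)-h(a_i)G_\mu(x,a_i)]$ from the $p_\mu$-term, and $\sum_i[h(b_i)-h(a_i)]F_\mu(x)$ from the normalizer; these do assemble into the stated numerator, and your treatment of infinite endpoints via $\phi(\pm\infty)=0$ (noting also $G_\mu(x,+\infty)=0$ and $p_\mu(-\infty)=0$, so both boundary products vanish) is consistent. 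What your approach buys is self-containedness: it avoids the external table lookup and the differentiation of the somewhat unwieldy formula \eqref{eq_pdf_v2}, at the cost of having to justify differentiation under the integral sign (routine here, as you note, by Gaussian domination) and a slightly more delicate sign bookkeeping across the three contributions.
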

\begin{proof}
    The chain rule implies that
  \begin{equation*}
    \frac{\partial \Phi^{-1}(F_\mu(z))}{\partial \mu} = \frac{\partial 
\Phi^{-1}(F_\mu(z))}{\partial F_\mu(z)} \frac{\partial F_\mu(z)}{\partial \mu}.
  \end{equation*}
  The inverse function theorem implies that the first derivative on the r.h.s. 
is equal to $1/\phi(\Phi^{-1}(F_\mu(z)))$. Therefore, it remains to show that 
$\partial F_\mu(z)/\partial \mu$
is equal to the numerator on the r.h.s. of the equation 
of the lemma. Leibniz's rule implies that $\partial F_\mu(z)/ \partial \mu = 
\int_{-\infty}^z \partial f_\mu(u) / \partial\mu ~ du$. Therefore, we compute 
$\partial f_\mu(z) / \partial\mu$ first. Lemma~\ref{le_dens_dmu} implies that
  \begin{equation*}
    \frac{\partial f_\mu(z)}{\partial \mu} = \frac{z-\mu}{\sigma^2}f_\mu(z) + 
f_\mu(z) \sum_{i=1}^k h(b_i) - h(a_i).
  \end{equation*}
We use the expression on the r.h.s. to compute
$\int_{-\infty}^x \partial f_\mu(u) /\partial \mu d u$.
    Lemma~\ref{le_int_f1} implies that the integral of the first summand is 
equal to
  \begin{equation*}
    -f_\mu(z) - \sum_{i=1}^k h(b_i)G_\mu(z,b_i) - h(a_i)G_\mu(z,a_i).
  \end{equation*}
  Because, in the second-to-last display, 
  the second summand on the r.h.s. depends on $x$ only through 
$f_\mu(z)$, it is easy to see the integral of this function is equal to
  \begin{equation*}
      F_\mu(z) \sum_{i=1}^k h(b_i) - h(a_i).
  \end{equation*}
   The sum of the last two expressions is equal to the numerator on the r.h.s of
   the equation of the lemma, which 
completes the proof.
\end{proof}

This lemma implies that proving Proposition~\ref{prop_core1} is equivalent to 
showing that $B_\mu(z)<0$ for
\begin{multline} \label{eq_B}
    B_\mu(z) = \frac{\rho}{\sigma} \phi(\Phi^{-1}(F_\mu(z))) -f_\mu(z) \\ + 
\sum_{i=1}^k h(b_i)(F_\mu(z)-G_\mu(z,b_i))-h(a_i)(F_\mu(z)-G_\mu(z,a_i)).
\end{multline}
Observe that
\begin{align*}
    0 &= \lim_{|z|\to\infty} f_\mu(z) = \lim_{|z|\to\infty} 
\phi(\Phi^{-1}(F_\mu(z)) \\
    &= \lim_{z\to-\infty} F_\mu(z) =  \lim_{z\to\infty} 1-F_\mu(z) \\
    &= \lim_{z\to-\infty} G_\mu(z,v) =  \lim_{z\to\infty} 1-G_\mu(z,v).
\end{align*}
This equation chain implies that $B_\mu(z)$ converges to $0$ as $|z| \to 
\infty$. Holding $\mu, \sigma^2$ and $\tau^2$ fixed, this is the same as saying 
that $B_\mu(z)$ converges to $0$ as $F_\mu(z)$ goes to $0$ or $1$. Let the 
function $F_\mu^{-1}(\alpha)$ be defined by the equation
\begin{equation} \label{F_inv}
    F_\mu(F_\mu^{-1}(\alpha)) = \alpha.
\end{equation}
Clearly, $F_\mu^{-1}(\alpha)$ is well-defined for all $\alpha \in (0,1)$ and we have that 
$F_\mu^{-1}(F_\mu(z)) = z$ for all $z \in \mathbb R$. To prove 
Proposition~\ref{prop_core1}, it is now sufficient to show that 
$B_\mu(F_\mu^{-1}(\alpha))$ is 
strictly convex as a function of $\alpha$ for any fixed $\mu, \sigma^2$ 
and $\tau^2$ (in view of the second-to-last display).

\begin{lemma}
    Let $B_\mu(z)$ be defined in \eqref{eq_B} and $F_\mu^{-1}(\alpha)$ in 
\eqref{F_inv}. Let $\mu, \sigma^2$ and $\tau^2$ be fixed. Then, for all $z \in 
\mathbb R$, 
    \begin{equation*}
        \frac{\partial^2 B_\mu(F_\mu^{-1}(\alpha))}{\partial \alpha^2}\Bigr\rvert_{\alpha = 
F_\mu(z)} = -\frac{\rho}{\sigma \phi(\Phi^{-1}(F_\mu(z)))} + \frac{1}{\sigma^2 
f_\mu(z)}.
    \end{equation*}
\end{lemma}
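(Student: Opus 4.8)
The plan is to reduce the whole computation to a single use of the inverse-function chain rule. Since $F_\mu'(x)=f_\mu(x)$, differentiating $F_\mu(F_\mu^{-1}(q))=q$ gives $\frac{d}{dq}F_\mu^{-1}(q)=1/f_\mu(F_\mu^{-1}(q))$, so that for any smooth $\psi$,
\[
\frac{d}{dq}\psi(F_\mu^{-1}(q)) ~=~ \frac{\psi'(x)}{f_\mu(x)}, \qquad x=F_\mu^{-1}(q).
\]
Applied to $\psi=B_\mu$, the first $q$-derivative is $B_\mu'(x)/f_\mu(x)$. Everything therefore hinges on showing that $B_\mu'(x)$ carries $f_\mu(x)$ as an explicit factor, with the remaining bracket an elementary function of $x$ and $F_\mu(x)$. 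Once this is in hand, one more differentiation in $q$, using $\frac{d}{dq}\Phi^{-1}(q)=1/\phi(\Phi^{-1}(q))$ (inverse-function theorem, as in the proof of Lemma~\ref{le_dx}) together with the same rule for $F_\mu^{-1}$, yields the asserted expression at once.

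To compute $B_\mu'(x)$ I would first record a derivative identity for the density. Writing $g_\mu(x,v)=\partial G_\mu(x,v)/\partial x$ and letting $D$ denote the $\mu$-dependent denominator in \eqref{eq_pdf_v2} and in $h$, the symmetry of the bivariate normal density of $(X,X+U)$ gives the product identity
\[
\frac{1}{\sigma}\phi\!\left(\frac{x-\mu}{\sigma}\right)\frac{1}{D}\,\frac{1}{\tau}\phi\!\left(\frac{v-x}{\tau}\right) ~=~ h(v)\,g_\mu(x,v).
\]
Using this together with $\phi'(t)=-t\phi(t)$, differentiation of the explicit formula \eqref{eq_pdf_v2} collapses to
\[
f_\mu'(x) ~=~ -\frac{x-\mu}{\sigma^2}f_\mu(x) - S(x), \qquad S(x)=\sum_{i=1}^k\big(h(b_i)g_\mu(x,b_i)-h(a_i)g_\mu(x,a_i)\big).
\]
The boundary sum $S(x)$ is precisely the $x$-derivative of $R(x):=\sum_{i=1}^k\big(h(b_i)G_\mu(x,b_i)-h(a_i)G_\mu(x,a_i)\big)$, which is exactly the sum entering $B_\mu$ in \eqref{eq_B}, and the coefficients $h(b_i),h(a_i)$ are constant in $x$.

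With these identities, differentiating $B_\mu(x)$ term by term is routine. The first term contributes $\frac{d}{dx}\phi(\Phi^{-1}(F_\mu(x)))=-\Phi^{-1}(F_\mu(x))f_\mu(x)$; the term $-f_\mu'(x)$ contributes $+S(x)$, while $-R'(x)$ contributes $-S(x)$, so the two boundary sums cancel, leaving
\[
B_\mu'(x) ~=~ f_\mu(x)\left(-\frac{\rho}{\sigma}\Phi^{-1}(F_\mu(x)) + \frac{x-\mu}{\sigma^2} + \sum_{i=1}^k\big(h(b_i)-h(a_i)\big)\right).
\]
Dividing by $f_\mu(x)$ and substituting $F_\mu(x)=q$, $x=F_\mu^{-1}(q)$, the first $q$-derivative equals $-\frac{\rho}{\sigma}\Phi^{-1}(q)+(F_\mu^{-1}(q)-\mu)/\sigma^2+\sum_i(h(b_i)-h(a_i))$. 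Differentiating once more in $q$ annihilates the last sum (constant in $q$) and produces $-\rho/(\sigma\phi(\Phi^{-1}(q)))+1/(\sigma^2 f_\mu(x))$, which is the claim.

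The main obstacle is the cancellation of the boundary sum $S(x)$: it is invisible termwise and rests entirely on the product identity above, i.e.\ on the fact that the boundary contribution to $f_\mu'(x)$ is exactly $R'(x)$ — so that $R(x)$ was built into $B_\mu$ for precisely this purpose. Verifying that identity (the joint-normal symmetry turning $\frac{1}{\sigma}\phi(\cdot)\frac{1}{\tau}\phi(\cdot)$ into $h(v)\,g_\mu(x,v)$) is the one genuinely delicate point; the rest is the inverse-function chain rule and $\phi'(t)=-t\phi(t)$.
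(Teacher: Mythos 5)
Your proposal is correct and follows essentially the same route as the paper: both arguments apply the inverse-function chain rule twice and reduce the computation to showing that $B_\mu'(x)$ factors as $f_\mu(x)$ times an elementary expression, with the boundary sums cancelling. Your ``product identity'' $h(v)\,g_\mu(x,v)=f_\mu(x)\,l(x,v)$ is exactly the content of the paper's Lemma~\ref{le_dens_G_dx}, and the cancellation you highlight is the paper's (unstated) ``straight-forward simplification,'' so you have correctly identified the one genuinely delicate step.
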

\begin{proof}
We start by computing the first derivative of 
$B_\mu(F_\mu^{-1}(\alpha))$ with 
respect to $\alpha$. The chain rule implies that
    \begin{equation*}
        \frac{\partial B_\mu(F_\mu^{-1}(\alpha))}{\partial \alpha} =  
	\frac{\partial 
B_\mu(F_\mu^{-1}(\alpha))}{\partial F_\mu^{-1}(\alpha)} \frac{\partial 
F_\mu^{-1}(\alpha)}{\partial \alpha}.
    \end{equation*}
    The inverse function theorem implies that the second derivative on the 
r.h.s. is equal to $1/ f_\mu(F_\mu^{-1}(\alpha))$. In view of the 
definitions of 
$B_\mu(z)$ and $F_\mu^{-1}(\alpha)$ in \eqref{eq_B} and \eqref{F_inv}, we 
see that to compute the first derivative on the r.h.s. we need to compute the 
derivatives of $\alpha$, $\phi(\Phi^{-1}(\alpha))$, $f_\mu(F_\mu^{-1}(\alpha))$ and 
$G_\mu(F_\mu^{-1}(\alpha),v)$ with respect to $F_\mu^{-1}(\alpha)$. 
Because $\partial 
F_\mu^{-1}(\alpha) / \partial \alpha$ is equal to 
$1/ f_\mu(F_\mu^{-1}(\alpha))$, it follows 
that the derivative of $\alpha$ with respect to $F_\mu^{-1}(\alpha)$ 
is equal to 
$f_\mu(F_\mu^{-1}(\alpha))$. The chain rule implies that the derivative of 
$\phi(\Phi^{-1}(\alpha))$ with respect to $F_\mu^{-1}(\alpha)$ is equal to 
$-\Phi^{-1}(\alpha) 
f_\mu(F_\mu^{-1}(\alpha))$. Lemma~\ref{le_dens_dx} implies that the 
derivative of 
$f_\mu(F_\mu^{-1}(\alpha))$ with respect to $F_\mu^{-1}(\alpha)$ is equal to
    \begin{equation*}
        -\frac{F_\mu^{-1}(\alpha)-\mu}{\sigma^2} f_\mu(F_\mu^{-1}(\alpha)) - 
f_\mu(F_\mu^{-1}(\alpha)) \sum_{i=1}^k l(F_\mu^{-1}(\alpha),b_i) - 
l(F_\mu^{-1}(\alpha),a_i),
    \end{equation*}
where $l(z,v)$ is defined in Lemma~\ref{le_dens_dx}. And finally, 
Lemma~\ref{le_dens_G_dx} implies that the derivative of 
$G_\mu(F_\mu^{-1}(\alpha),v)$ with 
respect to $F_\mu^{-1}(\alpha)$ is equal to
    \begin{equation*}
        \frac{f_\mu(F_\mu^{-1}(\alpha)) l(F_\mu^{-1}(\alpha),v)}{h(v)},
    \end{equation*}
    where $h(v)$ is defined in Lemma~\ref{le_dmu_formula} and $l(z,v)$ is 
defined in Lemma~\ref{le_dens_dx}. The previous four derivatives and the 
definition of $B_\mu(z)$ in \eqref{eq_B} entail, after 
straight-forward simplifications, that
    \begin{equation*}
        \frac{\partial B_\mu(F_\mu^{-1}(\alpha))}{\partial F_\mu^{-1}(\alpha)} = 
f_\mu(F_\mu^{-1}(\alpha)) \left( -\frac{\rho}{\sigma} \Phi^{-1}(\alpha) + 
\frac{F_\mu^{-1}(\alpha)-\mu}{\sigma^2} + \sum_{i=1}^k h(b_i) - h(a_i) \right)
    \end{equation*}
    and therefore
    \begin{equation*}
        \frac{\partial B_\mu(F_\mu^{-1}(\alpha))}{\partial \alpha} = -\frac{\rho}{\sigma} 
\Phi^{-1}(\alpha) + \frac{F_\mu^{-1}(\alpha)-\mu}{\sigma^2} + \sum_{i=1}^k h(b_i) - 
h(a_i).
    \end{equation*} 
    Now it is easy to see that
    \begin{equation*}
        \frac{\partial^2 B_\mu(F_\mu^{-1}(\alpha))}{\partial \alpha^2} = 
-\frac{\rho}{\sigma \phi(\Phi^{-1}(\alpha))} + \frac{1}{\sigma^2 
f_\mu(F_\mu^{-1}(\alpha))}.
    \end{equation*}
    The claim of the lemma follows by evaluating the second derivative at 
$\alpha=F_\mu(z)$. 
\end{proof}
To prove Proposition~\ref{prop_core1}, it remains to show that 
\begin{equation*}
    -\frac{\rho}{\sigma \phi(\Phi^{-1}(F_\mu(z)))} + \frac{1}{\sigma^2 
f_\mu(z)} > 0.
\end{equation*}
But this is the same as showing
\begin{equation*}
    \frac{f_\mu(z)}{\phi(\Phi^{-1}(F_\mu(z)))} < \frac{1}{\sigma \rho}.
\end{equation*}
The l.h.s. is equal to $\partial \Phi^{-1}(F_\mu(z)) / \partial z$ and in 
Lemma~\ref{le_dx} we have shown
that this inequality is true. This completes the proof of 
Proposition~\ref{prop_core1}.

\subsection{Proof of Proposition~\ref{prop_core2}} \label{sec_prop_core2}

Let $\epsilon > 0$. We will only consider the case where $\sup T = b_k < 
\infty$; the other case follows by similar arguments, mutatis mutandis. 
Recall the definition of 
$F_\mu(z)$ in \eqref{eq_cdf_cond}. By the law of total probability, we can 
write $F_\mu(z)$ as
\begin{equation*}
    \mathbb P(V_\mu >b_k -\epsilon) \mathbb E\left[G_\mu(z, V_\mu)| V_\mu > 
b_k-\epsilon \right] + \mathbb P(V_\mu \leq b_k -\epsilon) \mathbb 
E\left[G_\mu(z, V_\mu)| V_\mu \leq b_k-\epsilon \right].
\end{equation*}
Note that both conditional expectations are bounded by $1$. Because the random 
variable $V_\mu$ (defined under equation \eqref{eq_cdf_cond}) is a truncated 
normal with mean $\mu$, variance $\sigma^2+\tau^2$ and truncation set $T$, it 
follows that $V_\mu$ converges in probability to $b_k$ as $\mu$ goes to 
$\infty$. Now as $z$ goes to $\infty$, it follows that $\mu_\alpha(z)$ goes to 
$\infty$ (cf. the discussion after the proof of Lemma~\ref{le_cdf}). 
This implies that
\begin{equation*}
    \lim_{z\to\infty} F_{\mu_{\alpha}(z)}(z) = \lim_{z\to\infty} \mathbb 
E\left[G_{\mu_{\alpha}(z)}(z, V_{\mu_{\alpha}(z)})| V_{\mu_{\alpha}(z)} > b_k-\epsilon \right].
\end{equation*}
Note that $G_\mu(z,v)$ is strictly decreasing in $v$. This means that 
$F_\mu(z)$ is bounded from below by $G_\mu(z,b_k)$. But this also means that 
$\lim_{z\to\infty} F_{\mu_{\alpha}(z)}(z)$ is bounded from below by 
$\lim_{z\to\infty} G_{\mu_{\alpha}(z)}(z,b_k)$. On the other hand, 
observe that the 
conditional expectation on the r.h.s. of the preceding display is bounded 
from above by $G_{\mu_\alpha(z)}(z,b_k-\epsilon)$. 
This means that $\lim_{z\to\infty} 
F_{\mu_{\alpha}(z)}(z)$ is bounded from above by $\lim_{z\to\infty} 
G_{\mu_{\alpha}(z)}(z,b_k- \epsilon)$. Since $F_{\mu_{\alpha}(z)}(z)$ is 
equal to $1-\alpha$ 
for all $z \in \mathbb R$, it follows that
\begin{equation*}
    \lim_{z\to\infty} G_{\mu_{\alpha}(z)}(z,b_k) \leq 1-\alpha 
    \leq \lim_{z\to\infty} 
G_{\mu_{\alpha}(z)}(z,b_k- \epsilon).
\end{equation*}
Because $\epsilon$ was arbitrary and $G_\mu(z,v)$ is simply a normal c.d.f.,
which is uniformly continuous,
the claim of the proposition follows.

\subsection{Auxiliary results} \label{sec_aux}

\begin{lemma} \label{le_cdf}
  For every $z \in \mathbb R$, $F_\mu(z)$ is continuous and strictly decreasing 
in $\mu$ and satisfies
  \begin{equation*}
      \lim_{\mu \to \infty} F_\mu(z) = \lim_{\mu \to -\infty} 1-F_\mu(z) = 0.
  \end{equation*}
\end{lemma}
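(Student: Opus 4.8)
The plan is to work from the two representations already available: the mixture formula \eqref{eq_cdf_cond}, $F_\mu(x)=\mathbb E[G_\mu(x,V_\mu)]$ with $V_\mu$ truncated normal, and the closed form \eqref{eq_pdf_v2} for the density $f_\mu$. Writing $s=\sqrt{\sigma^2+\tau^2}$, $\psi(x)=\sum_{i=1}^k[\Phi((b_i-x)/\tau)-\Phi((a_i-x)/\tau)]$ and $C(\mu)=\sum_{i=1}^k[\Phi((b_i-\mu)/s)-\Phi((a_i-\mu)/s)]$, formula \eqref{eq_pdf_v2} reads $f_\mu(x)=\sigma^{-1}\phi((x-\mu)/\sigma)\,\psi(x)/C(\mu)$, whence $F_\mu(x)=(\sigma C(\mu))^{-1}\int_{-\infty}^x\phi((t-\mu)/\sigma)\psi(t)\,dt$. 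Since $\psi$ is bounded by $1$ and $C(\mu)$ is continuous and strictly positive (because $T$ has positive Lebesgue measure), continuity of $F_\mu(x)$ in $\mu$ follows from dominated convergence applied to this integral.

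For strict monotonicity I would establish a strict monotone likelihood ratio. Since $\psi(x)$ does not depend on $\mu$ and $C(\mu)$ does not depend on $x$, for $\mu_1<\mu_2$ the ratio $f_{\mu_2}(x)/f_{\mu_1}(x)=(C(\mu_1)/C(\mu_2))\exp((\mu_2-\mu_1)x/\sigma^2)$ is strictly increasing in $x$; note that the $x$-dependent truncation factor $\psi(x)$ cancels. As $f_\mu$ is strictly positive on all of $\mathbb R$ (because $\psi(x)=\mathbb P(x+U\in T)>0$ for every $x$), this strict MLR yields strict stochastic ordering; concretely, with $r=f_{\mu_2}/f_{\mu_1}$ one writes $F_{\mu_2}(x)-F_{\mu_1}(x)=\int_{-\infty}^x(r(t)-1)f_{\mu_1}(t)\,dt$, and since $r-1$ is strictly increasing, changes sign exactly once, and $\int_{-\infty}^\infty(r-1)f_{\mu_1}=0$, this integral is strictly negative for every finite $x$. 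Hence $F_{\mu_2}(x)<F_{\mu_1}(x)$, i.e. $F_\mu(x)$ is strictly decreasing in $\mu$ (cf.\ \citealp{lehmann2006}).

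For the limits it suffices, by the evident symmetry under $x\mapsto-x$, $\mu\mapsto-\mu$, $T\mapsto-T$ (which maps $T$ to a truncation set of the same form), to treat $\lim_{\mu\to\infty}F_\mu(x)=0$. I would use that $X\,|\,X+U\in T\stackrel{d}{=}\rho^2\mu+(1-\rho^2)V_\mu+\sigma\rho Z$ with $Z\sim N(0,1)$ independent of $V_\mu$, which is \eqref{rv_cond_v} integrated against the law of $V_\mu$. Because $G_\mu(x,\cdot)$ in \eqref{cdf_normal} is decreasing, for any fixed $M$ we obtain $F_\mu(x)\le\Phi\bigl((x-\rho^2\mu-(1-\rho^2)M)/(\sigma\rho)\bigr)+\mathbb P(V_\mu\le M)$: on $\{V_\mu>M\}$ the conditional mean exceeds $\rho^2\mu+(1-\rho^2)M$, and on $\{V_\mu\le M\}$ one bounds $G_\mu\le1$. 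Letting $\mu\to\infty$ with $M$ fixed, the first term vanishes since $\rho^2\mu\to\infty$, so it remains to show $\mathbb P(V_\mu\le M)\to0$ for every $M$, i.e.\ that the truncated normal $V_\mu$ escapes to $+\infty$.

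The main obstacle is precisely this last point when $T$ is bounded above, say $\sup T=b_k<\infty$: then $\mathbb P(X+U\in T)=C(\mu)\to0$, so $\mathbb P(V_\mu\le M)=\mathbb P(X+U\le M,\,X+U\in T)/C(\mu)$ is a $0/0$ expression and the numerator cannot simply be bounded. I would resolve this by comparing Gaussian tail masses: fixing an interval $(c,d)\subseteq T$ with $c>M$ (possible whenever $M<\sup T$), one has $\mathbb P(V_\mu\le M)\le\Phi((M-\mu)/s)/(\Phi((d-\mu)/s)-\Phi((c-\mu)/s))$, and a standard Mills-ratio estimate shows this ratio tends to $0$ as $\mu\to\infty$, since $M$ lies farther into the left tail than the fixed interval $(c,d)$ and the ratio of the corresponding Gaussian densities decays like $\exp(-(c-M)(2\mu-M-c)/(2s^2))$. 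When $T$ is unbounded above the conclusion is immediate, as then $C(\mu)\to1$ while the numerator $\Phi((M-\mu)/s)\to0$. This gives $\mathbb P(V_\mu\le M)\to0$ for every $M$ and completes the proof of both limits.
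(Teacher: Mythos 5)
Your proof is correct and follows essentially the same route as the paper's: strict monotone likelihood ratio via the cancellation of the truncation factor in $f_{\mu_2}(x)/f_{\mu_1}(x)$, and a law-of-total-probability decomposition at a threshold $M<\sup T$ together with $\mathbb P(V_\mu\le M)\to 0$ for the limits. The only difference is that you supply details the paper leaves implicit (the argument that strict MLR gives strict stochastic ordering, and the Gaussian tail-ratio estimate resolving the $0/0$ issue for $\mathbb P(V_\mu\le M)$ when $T$ is bounded above), which is a welcome tightening rather than a different approach; just note that your phrase ``for every $M$'' should read ``for every $M<\sup T$'', which is all your construction uses and all the argument needs.
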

\begin{proof}
    Continuity is obvious. For monotonicity, it is sufficient to show that 
$f_\mu(z)$ has monotone likelihood ratio because \cite{lee2016} already showed 
that monotone likelihood ratio implies monotonicity. This means for $\mu_1 < 
\mu_2$, we need to show that $f_{\mu_2}(z) / f_{\mu_1}(z)$ is strictly 
increasing in $z$. In view of the definition of $f_\mu(z)$ in 
\eqref{eq_pdf_v2}, it is easy to see that $f_{\mu_2}(z) / f_{\mu_1}(z)$ can be 
written as $c \exp((\mu_2-\mu_1)x/\sigma)$, where $c$ is a positive constant 
that does not depend on $z$. Because $\mu_1 < \mu_2$ and the exponential 
function is strictly increasing, it follows that $f_\mu(z)$ has monotone 
likelihood ratio. Finally, we show that $\lim_{\mu \to \infty} F_\mu(z) = 0$. 
The other part of this equation follows by similar arguments. Let $M < b_k$. 
Recall the definition of $F_\mu(z)$ in \eqref{eq_cdf_cond}. By the law of total 
probability,  we can write $F_\mu(z)$ as
    \begin{equation*}
        \mathbb P(V_\mu > M) \mathbb E\left[G_\mu(z, V_\mu)| V_\mu > M \right] 
+ \mathbb P(V_\mu \leq M) \mathbb E\left[G_\mu(z, V_\mu)| V_\mu \leq M \right].
    \end{equation*}
    Note that both conditional expectations are bounded by $1$. Because the 
random variable $V_\mu$ (defined under equation \eqref{eq_cdf_cond}) is a 
truncated normal with mean $\mu$, variance $\sigma^2+\tau^2$ and truncation set 
$T$, it follows that $\lim_{\mu\to\infty} \mathbb P(V_\mu > M) = 1- 
\lim_{\mu\to\infty} \mathbb P(V_\mu \leq M) = 1$. This implies that
    \begin{equation*}
        \lim_{\mu\to\infty} F_\mu(z) = \lim_{\mu\to\infty} \mathbb 
E\left[G_\mu(z, V_\mu)| V_\mu > M \right].
    \end{equation*}
    Because $G_\mu(z,v)$ is strictly decreasing in $v$, it follows that the 
conditional expectation on the r.h.s. is bounded from above by $G_\mu(z, M)$. 
But this means that $\lim_{\mu\to\infty} F_\mu(z)$ is bounded by 
$\lim_{\mu\to\infty} G_\mu(z, M)$. Because latter limit is equal to $0$, the 
same follows for the former limit.
\end{proof}
This lemma ensures that the function $\mu_\alpha(z)$ is well defined, continuous, 
strictly increasing in $z$ and $\alpha$ and that $\lim_{z\to\infty} \mu_\alpha(z) = 
\lim_{z\to-\infty} -\mu_\alpha(z)=\infty$ (see also
Lemma A.3 in \cite{kivaranovic2020}).

\begin{lemma} \label{le_dens_dmu}
    Let the function $h(v)$ be defined as in Lemma~\ref{le_dmu_formula}. For 
all $x \in \mathbb R$ and all $\mu \in \mathbb R$, we have
    \begin{equation*}
        \frac{\partial f_\mu(z)}{\partial \mu} = \frac{x-\mu}{\sigma^2}f_\mu(z) 
+ f_\mu(z) \sum_{i=1}^k h(b_i) - h(a_i).
    \end{equation*}
\end{lemma}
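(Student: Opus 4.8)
The plan is to differentiate the explicit formula for $f_\mu(x)$ in \eqref{eq_pdf_v2} directly with respect to $\mu$. In that formula, $\mu$ enters in exactly two places: the Gaussian prefactor $\frac{1}{\sigma}\phi\left(\frac{x-\mu}{\sigma}\right)$ and the denominator $D(\mu) := \sum_{i=1}^k \Phi\left(\frac{b_i-\mu}{\sqrt{\sigma^2+\tau^2}}\right) - \Phi\left(\frac{a_i-\mu}{\sqrt{\sigma^2+\tau^2}}\right)$; the numerator $N(x) := \sum_{i=1}^k \Phi\left(\frac{b_i-x}{\tau}\right) - \Phi\left(\frac{a_i-x}{\tau}\right)$ carries no $\mu$-dependence. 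Writing $f_\mu(x) = \frac{1}{\sigma}\phi\left(\frac{x-\mu}{\sigma}\right) N(x) / D(\mu)$ and applying the product and quotient rules, I would split $\partial f_\mu(x)/\partial \mu$ into the contribution from the prefactor and the contribution from $1/D(\mu)$.

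For the prefactor, the identity $\phi'(z) = -z\phi(z)$ together with the chain rule gives $\frac{\partial}{\partial\mu}\phi\left(\frac{x-\mu}{\sigma}\right) = \frac{x-\mu}{\sigma^2}\phi\left(\frac{x-\mu}{\sigma}\right)$. Multiplying by the $\mu$-free factor $N(x)/D(\mu)$ reproduces precisely $\frac{x-\mu}{\sigma^2} f_\mu(x)$, which is the first summand in the claimed expression.

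For the denominator, I would compute, via $\Phi' = \phi$ and the chain rule, $D'(\mu) = -\frac{1}{\sqrt{\sigma^2+\tau^2}} \sum_{i=1}^k \left[\phi\left(\frac{b_i-\mu}{\sqrt{\sigma^2+\tau^2}}\right) - \phi\left(\frac{a_i-\mu}{\sqrt{\sigma^2+\tau^2}}\right)\right]$. The quotient rule then contributes $-\frac{1}{\sigma}\phi\left(\frac{x-\mu}{\sigma}\right) N(x) D'(\mu)/D(\mu)^2$; observing that $\frac{1}{\sigma}\phi\left(\frac{x-\mu}{\sigma}\right) N(x) = f_\mu(x) D(\mu)$, this collapses to $-f_\mu(x) D'(\mu)/D(\mu)$. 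The decisive step is to match $-D'(\mu)/D(\mu)$ against the definition of $h$ in Lemma \ref{le_dmu_formula}: since $h(v) = \frac{1}{\sqrt{\sigma^2+\tau^2}}\phi\left(\frac{v-\mu}{\sqrt{\sigma^2+\tau^2}}\right)/D(\mu)$, one reads off $-D'(\mu)/D(\mu) = \sum_{i=1}^k h(b_i) - h(a_i)$, so this contribution equals $f_\mu(x)\sum_{i=1}^k h(b_i) - h(a_i)$. Summing the two contributions yields the stated formula.

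Each step is a routine application of the chain and quotient rules, so there is no genuine obstacle; the only points demanding care are the bookkeeping that recombines the prefactor and $N(x)$ back into a factor of $f_\mu(x)$ after differentiating the denominator, and the sign tracking that makes $-D'(\mu)/D(\mu)$ coincide with $\sum_{i=1}^k h(b_i) - h(a_i)$.
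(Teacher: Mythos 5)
Your proposal is correct and follows essentially the same route as the paper: both differentiate the explicit formula \eqref{eq_pdf_v2} via the product/quotient and chain rules, obtain the first term from the Gaussian prefactor via $\phi'(z)=-z\phi(z)$, and identify the logarithmic derivative of the denominator with $\sum_{i=1}^k h(b_i)-h(a_i)$. The sign bookkeeping you flag is handled the same way in the paper's proof.
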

\begin{proof}
    In view of the definition of $f_\mu(z)$ in \eqref{eq_pdf_v2}, the chain 
rule and product rule imply that the derivative of $f_\mu(z)$ with respect to 
$\mu$ is equal to
    \begin{gather*}
\frac{x-\mu}{\sigma^2}\frac{1}{\sigma}\phi\left(\frac{x-\mu}{\sigma}\right)\frac
{\sum_{i=1}^k \Phi\left(\frac{b_i-x}{\tau}\right) - 
\Phi\left(\frac{a_i-x}{\tau}\right) }{\sum_{i=1}^k \Phi\left(\frac{b_i - 
\mu}{\sqrt{\sigma^2+\tau^2}}\right) - \Phi\left(\frac{a_i - 
\mu}{\sqrt{\sigma^2+\tau^2}}\right)} \\
        + 
\frac{1}{\sigma}\phi\left(\frac{x-\mu}{\sigma}\right)\frac{\left(\sum_{i=1}^k 
\Phi\left(\frac{b_i-x}{\tau}\right) - 
\Phi\left(\frac{a_i-x}{\tau}\right)\right) \left( 
\frac{1}{\sqrt{\sigma^2+\tau^2}} \sum_{i=1}^k \phi\left(\frac{b_i - 
\mu}{\sqrt{\sigma^2+\tau^2}}\right) - \phi\left(\frac{a_i - 
\mu}{\sqrt{\sigma^2+\tau^2}}\right) \right)}{ \left(\sum_{i=1}^k 
\Phi\left(\frac{b_i - \mu}{\sqrt{\sigma^2+\tau^2}}\right) - \Phi\left(\frac{a_i 
- \mu}{\sqrt{\sigma^2+\tau^2}}\right)\right)^2}.
    \end{gather*}
    In view of the definition of $f_\mu(z)$, it is easy to see that the first 
summand is equal to 
    \begin{equation*}
        \frac{x-\mu}{\sigma^2}f_\mu(z)
    \end{equation*}
    and, in view of the definitions of $f_\mu(z)$ and $h(v)$, that the second 
summand is equal to
    \begin{equation*}
        f_\mu(z) \sum_{i=1}^k h(b_i) - h(a_i).
    \end{equation*}
\end{proof}

\begin{lemma} \label{le_int_f1}
    Let $G_\mu(z, v)$ be defined as in \eqref{cdf_normal} and $h(v)$ as in 
Lemma~\ref{le_dmu_formula}. For all $x \in \mathbb R$ and all $\mu \in \mathbb 
R$, we have
    \begin{equation*}
        \int_{-\infty}^z \frac{u-\mu}{\sigma^2} f_\mu(u) du ~ = ~ -f_\mu(z) - 
\sum_{i=1}^k  h(b_i)G_\mu(z,b_i) - h(a_i)G_\mu(z,a_i).
    \end{equation*}
\end{lemma}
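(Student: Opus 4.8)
The plan is to prove the identity by integration by parts, exploiting the special structure of the density representation in \eqref{eq_pdf_v2}. First I would write $f_\mu(u) = g_\mu(u)\,N(u)/D_\mu$, where $g_\mu(u) = \frac{1}{\sigma}\phi\left(\frac{u-\mu}{\sigma}\right)$ is the $N(\mu,\sigma^2)$ density, $N(u) = \sum_{i=1}^k \Phi\left(\frac{b_i-u}{\tau}\right) - \Phi\left(\frac{a_i-u}{\tau}\right)$ is the only $u$-dependent factor in the numerator of \eqref{eq_pdf_v2}, and $D_\mu = \sum_{i=1}^k \Phi\left(\frac{b_i-\mu}{\sqrt{\sigma^2+\tau^2}}\right) - \Phi\left(\frac{a_i-\mu}{\sqrt{\sigma^2+\tau^2}}\right)$ is the $u$-independent denominator. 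The key elementary identity is $\frac{u-\mu}{\sigma^2} g_\mu(u) = -g_\mu'(u)$, which follows from $\phi'(z) = -z\phi(z)$. Hence the integrand becomes $\frac{u-\mu}{\sigma^2} f_\mu(u) = -\frac{1}{D_\mu} N(u)\, g_\mu'(u)$, a total derivative weighted by the bounded factor $N(u)$.

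Next I would integrate by parts, obtaining $\int_{-\infty}^x -\frac{1}{D_\mu}N(u)g_\mu'(u)\,du = -\frac{1}{D_\mu}\big[N(u)g_\mu(u)\big]_{-\infty}^x + \frac{1}{D_\mu}\int_{-\infty}^x N'(u) g_\mu(u)\,du$. At $u\to-\infty$ the normal density $g_\mu(u)$ vanishes while $N(u)$ stays bounded (each $\Phi$ lies in $[0,1]$), so the lower boundary term is zero; at $u = x$ the boundary term equals $\frac{1}{D_\mu}N(x)g_\mu(x) = f_\mu(x)$, contributing $-f_\mu(x)$ overall. Differentiating $N$ gives $N'(u) = \sum_{i=1}^k -\frac{1}{\tau}\phi\left(\frac{b_i-u}{\tau}\right) + \frac{1}{\tau}\phi\left(\frac{a_i-u}{\tau}\right)$, so the remaining integral splits into terms of the form $\pm\frac{1}{D_\mu}\int_{-\infty}^x \frac{1}{\tau}\phi\left(\frac{c-u}{\tau}\right) g_\mu(u)\,du$ with $c\in\{a_i,b_i\}$.

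The main computation, and the only slightly delicate step, is evaluating each such integral of a product of two Gaussian densities in $u$. Completing the square (the standard Gaussian product identity) gives $\frac{1}{\tau}\phi\left(\frac{c-u}{\tau}\right)\frac{1}{\sigma}\phi\left(\frac{u-\mu}{\sigma}\right) = \frac{1}{\sqrt{\sigma^2+\tau^2}}\phi\left(\frac{c-\mu}{\sqrt{\sigma^2+\tau^2}}\right)\cdot \frac{1}{s}\phi\left(\frac{u-m}{s}\right)$, with $s = \sigma\tau/\sqrt{\sigma^2+\tau^2}$ and $m = (\sigma^2 c + \tau^2\mu)/(\sigma^2+\tau^2)$. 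Integrating the remaining normal density from $-\infty$ to $x$ therefore yields exactly $\Phi\left(\frac{x-m}{s}\right)$, and after dividing by $D_\mu$ the prefactor is exactly $h(c)$ by the definition of $h$ in Lemma \ref{le_dmu_formula}. I expect the parameter matching here to be where care is most needed: substituting $\rho^2 = \tau^2/(\sigma^2+\tau^2)$ shows $s = \sigma\rho$ and $m = \rho^2\mu+(1-\rho^2)c$, so that $\Phi\left(\frac{x-m}{s}\right) = G_\mu(x,c)$ by \eqref{cdf_normal}. Collecting the $-f_\mu(x)$ boundary contribution together with $-\sum_{i=1}^k h(b_i)G_\mu(x,b_i) - h(a_i)G_\mu(x,a_i)$ (the signs coming from $N'$) then gives the claimed formula.
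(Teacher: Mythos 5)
Your proposal is correct and arrives at exactly the closed-form antiderivative the paper relies on: the paper simply cites Equation 10,011.1 of \cite{owen1980} for each integral of the form $\int_{-\infty}^x \frac{u-\mu}{\sigma^3}\phi\left(\frac{u-\mu}{\sigma}\right)\Phi\left(\frac{c-u}{\tau}\right)du$, whereas you re-derive that formula via integration by parts plus the Gaussian product identity (the same identity verified in Lemma \ref{le_dens_G_dx}), collecting the $-f_\mu(x)$ contribution once as a boundary term instead of summing it over the $2k$ terms. The parameter matching $s=\sigma\rho$, $m=\rho^2\mu+(1-\rho^2)c$ and the signs coming from $N'$ all check out, so this is essentially the paper's argument made self-contained.
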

\begin{proof}
    By definition of $f_\mu(z)$ in \eqref{eq_pdf_v2}, the integral can be 
written as
    \begin{equation*}
        \frac{\sum_{i=1}^k \int_{-\infty}^z \frac{u-\mu}{\sigma^3} 
\phi\left(\frac{u-\mu}{\sigma}\right) \Phi\left(\frac{b_i-u}{\tau}\right) du - 
\int_{-\infty}^z \frac{u-\mu}{\sigma^3} \phi\left(\frac{u-\mu}{\sigma}\right) 
\Phi\left(\frac{a_i-u}{\tau}\right) du }{\sum_{i=1}^k \Phi\left(\frac{b_i - 
\mu}{\sqrt{\sigma^2+\tau^2}}\right) - \Phi\left(\frac{a_i - 
\mu}{\sqrt{\sigma^2+\tau^2}}\right)}.
    \end{equation*}
    Note that all integrands in the numerator are of the same form. They only 
differ in the constants $a_1,b_1,\dots,a_k,b_k$. This means, we can apply 
Equation 10,011.1 of \cite{owen1980} to each integral. This equation implies 
that the numerator is equal to
    \begin{align*}
        &\sum_{i=1}^k \frac{-1}{\sqrt{\sigma^2+\tau^2}} \phi\left(\frac{b_i - 
\mu}{\sqrt{\sigma^2+\tau^2}} \right) \Phi\left(\frac{z - (\rho^2\mu + 
(1-\rho^2)b_i)}{\sigma\rho} \right) - \frac{1}{\sigma}  
\phi\left(\frac{z-\mu}{\sigma}\right) \Phi\left(\frac{b_i-z}{\tau}\right) \\
        &- \left(\frac{-1}{\sqrt{\sigma^2+\tau^2}} \phi\left(\frac{a_i - 
\mu}{\sqrt{\sigma^2+\tau^2}} \right) \Phi\left(\frac{z - (\rho^2\mu + 
(1-\rho^2)a_i)}{\sigma\rho} \right) - \frac{1}{\sigma}  
\phi\left(\frac{z-\mu}{\sigma}\right) \Phi\left(\frac{a_i-z}{\tau}\right) 
\right).
    \end{align*}
    (Also note that this equation can easily be verified by differentiation of 
the antiderivative.) In view of the definitions of $f_\mu(z)$, $h(v)$ and 
$G_\mu(z, v)$, we can see that the claim of the lemma is true.
\end{proof}

\begin{lemma} \label{le_dens_dx}
    For all $z \in \mathbb R$ and all $\mu \in \mathbb R$, we have
    \begin{equation*}
        \frac{\partial f_\mu(z)}{\partial z} =  -\frac{z-\mu}{\sigma^2}f_\mu(z) 
- f_\mu(z) \sum_{i=1}^k l(z,b_i) - l(z,a_i),
    \end{equation*}
    where
    \begin{equation*}
        l(z,v) = \frac{\frac{1}{\tau} \phi\left(\frac{v 
-z}{\tau}\right)}{\sum_{i=1}^k \Phi\left(\frac{b_i -z}{\tau}\right) -  
\Phi\left(\frac{a_i -z}{\tau}\right)}.
    \end{equation*}
\end{lemma}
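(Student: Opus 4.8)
The plan is to differentiate the explicit expression for $f_\mu(x)$ given in \eqref{eq_pdf_v2} directly with respect to $x$, mirroring the computation in the proof of Lemma~\ref{le_dens_dmu}, where the derivative was instead taken with respect to $\mu$. I would write $f_\mu(x)$ as the product of the $x$-dependent Gaussian factor $\frac{1}{\sigma}\phi\!\left(\frac{x-\mu}{\sigma}\right)$ and the fraction whose numerator $N(x)=\sum_{i=1}^k\left[\Phi\!\left(\frac{b_i-x}{\tau}\right)-\Phi\!\left(\frac{a_i-x}{\tau}\right)\right]$ depends on $x$ and whose denominator depends only on $\mu$. The derivative then follows from the product and chain rules.

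First I would differentiate the Gaussian prefactor. Using $\phi'(z)=-z\,\phi(z)$ together with the chain rule gives $\frac{\partial}{\partial x}\frac{1}{\sigma}\phi\!\left(\frac{x-\mu}{\sigma}\right)=-\frac{x-\mu}{\sigma^2}\cdot\frac{1}{\sigma}\phi\!\left(\frac{x-\mu}{\sigma}\right)$, so that multiplying by the unchanged fraction reproduces the first summand $-\frac{x-\mu}{\sigma^2}f_\mu(x)$.

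Second I would differentiate the fraction. Since only the numerator $N(x)$ depends on $x$, and since $\frac{\partial}{\partial x}\Phi\!\left(\frac{b_i-x}{\tau}\right)=-\frac{1}{\tau}\phi\!\left(\frac{b_i-x}{\tau}\right)$ (and analogously for the $a_i$ terms), the derivative of $N(x)$ equals $\sum_{i=1}^k\left[-\frac{1}{\tau}\phi\!\left(\frac{b_i-x}{\tau}\right)+\frac{1}{\tau}\phi\!\left(\frac{a_i-x}{\tau}\right)\right]$. The only step requiring a moment's care is to recognise that the denominator appearing in the definition of $l(x,v)$ is precisely $N(x)$; multiplying and dividing by $N(x)$ therefore lets me extract a full copy of $f_\mu(x)$ and rewrite the remaining factor as $-\sum_{i=1}^k\left[l(x,b_i)-l(x,a_i)\right]$. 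This produces exactly the second summand $-f_\mu(x)\sum_{i=1}^k\left[l(x,b_i)-l(x,a_i)\right]$ of the statement.

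I do not anticipate any genuine obstacle here: the result is an elementary differentiation, and the argument is a verbatim analogue of Lemma~\ref{le_dens_dmu} with the roles of $x$ and $\mu$, and correspondingly those of the numerator sum $N(x)$ and the $\mu$-dependent denominator sum, interchanged. In particular, the sign flip of the leading term relative to Lemma~\ref{le_dens_dmu} arises simply because $x$ enters $\phi\!\left(\frac{x-\mu}{\sigma}\right)$ and each $\Phi\!\left(\frac{b_i-x}{\tau}\right)$ with the opposite sign to $\mu$. Adding the two contributions yields the claimed formula.
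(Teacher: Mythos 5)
Your proposal is correct and follows essentially the same route as the paper: apply the product and chain rules directly to the explicit formula \eqref{eq_pdf_v2}, identify the first summand as $-\frac{x-\mu}{\sigma^2}f_\mu(x)$, and rewrite the derivative of the $x$-dependent numerator in terms of $l(x,v)$ and $f_\mu(x)$ by multiplying and dividing by that numerator. The observation that the argument mirrors Lemma~\ref{le_dens_dmu} with the roles of $x$ and $\mu$ interchanged (accounting for the sign flip) matches the paper's computation exactly.
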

\begin{proof}
    In view of the definition of $f_\mu(z)$ in \eqref{eq_pdf_v2}, the chain 
rule and product rule imply that the derivative of $f_\mu(z)$ with respect to 
$z$ is equal to
    \begin{align*}
&-\frac{z-\mu}{\sigma^2}\frac{1}{\sigma}\phi\left(\frac{z-\mu}{\sigma}\right)
\frac{\sum_{i=1}^k \Phi\left(\frac{b_i-z}{\tau}\right) - 
\Phi\left(\frac{a_i-z}{\tau}\right) }{\sum_{i=1}^k \Phi\left(\frac{b_i - 
\mu}{\sqrt{\sigma^2+\tau^2}}\right) - \Phi\left(\frac{a_i - 
\mu}{\sqrt{\sigma^2+\tau^2}}\right)} \\
        &- 
\frac{1}{\sigma}\phi\left(\frac{z-\mu}{\sigma}\right)\frac{\frac{1}{\tau} 
\sum_{i=1}^k \phi\left(\frac{b_i-z}{\tau}\right) - 
\phi\left(\frac{a_i-z}{\tau}\right)}{ \sum_{i=1}^k \Phi\left(\frac{b_i - 
\mu}{\sqrt{\sigma^2+\tau^2}}\right) - \Phi\left(\frac{a_i - 
\mu}{\sqrt{\sigma^2+\tau^2}}\right) }.
    \end{align*}
    It is easy to see that the first summand is equal to 
    \begin{equation*}
        -\frac{z-\mu}{\sigma^2}f_\mu(z)
    \end{equation*}
    and, in view of the definitions of $f_\mu(z)$ and $l(z,v)$, that the second 
summand is equal to
    \begin{equation*}
        -f_\mu(z) \sum_{i=1}^k l(z, b_i) - l(z,a_i).
    \end{equation*}
\end{proof}

\begin{lemma} \label{le_dens_G_dx}
    Let $h(v)$ be defined as in Lemma~\ref{le_dmu_formula} and $l(z,v)$ as in 
Lemma~\ref{le_dens_dx}. For all $z \in \mathbb R$ and all $\mu \in \mathbb R$, 
we have
    \begin{equation*}
        \frac{\partial G_\mu(z,v)}{\partial z} = \frac{f_\mu(z) l(z,v)}{h(v)}.
    \end{equation*}    
\end{lemma}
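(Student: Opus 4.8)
The plan is to recognize both sides as the same conditional normal density, computed in two different ways. First I would differentiate the definition \eqref{cdf_normal} of $G_\mu(x,v)$ with respect to $x$. By the chain rule,
\[
  \frac{\partial G_\mu(x,v)}{\partial x} = \frac{1}{\sigma\rho}\,\phi\!\left(\frac{x-(\rho^2\mu+(1-\rho^2)v)}{\sigma\rho}\right),
\]
which, by \eqref{rv_cond_v}, is exactly the density of the conditional law $X \mid X+U=v \sim N(\rho^2\mu+(1-\rho^2)v,\sigma^2\rho^2)$ evaluated at $x$. Recalling $\rho^2=\tau^2/(\sigma^2+\tau^2)$, so that $\sigma\rho=\sigma\tau/\sqrt{\sigma^2+\tau^2}$, this conditional density is the target that the right-hand side must reproduce.

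Next I would expand the right-hand side by substituting the explicit forms of $f_\mu(x)$ from \eqref{eq_pdf_v2}, of $l(x,v)$ from Lemma~\ref{le_dens_dx}, and of $h(v)$ from Lemma~\ref{le_dmu_formula}. The decisive observation is that the two truncation-dependent normalizers cancel in the ratio $f_\mu(x)\,l(x,v)/h(v)$: the sum $\sum_{i=1}^k[\Phi((b_i-x)/\tau)-\Phi((a_i-x)/\tau)]$ appears in the numerator of $f_\mu(x)$ but in the denominator of $l(x,v)$, while $\sum_{i=1}^k[\Phi((b_i-\mu)/\sqrt{\sigma^2+\tau^2})-\Phi((a_i-\mu)/\sqrt{\sigma^2+\tau^2})]$ appears in the denominator of $f_\mu(x)$ but in the numerator of $h(v)$. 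After these cancellations one is left with
\[
  \frac{f_\mu(x)\,l(x,v)}{h(v)} = \frac{\sqrt{\sigma^2+\tau^2}}{\sigma\tau}\cdot\frac{\phi\!\left(\frac{x-\mu}{\sigma}\right)\phi\!\left(\frac{v-x}{\tau}\right)}{\phi\!\left(\frac{v-\mu}{\sqrt{\sigma^2+\tau^2}}\right)}.
\]

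Finally I would identify this ratio with the conditional density from the first step. The three factors are, up to the indicated constants, the unconditional densities of $X\sim N(\mu,\sigma^2)$ at $x$, of $U\sim N(0,\tau^2)$ at $v-x$, and of $X+U\sim N(\mu,\sigma^2+\tau^2)$ at $v$; hence the displayed expression is precisely $p_X(x)\,p_U(v-x)/p_{X+U}(v)$, which by Bayes' rule is the conditional density of $X$ given $X+U=v$. Since \eqref{rv_cond_v} identifies that same conditional law, the right-hand side equals $\partial G_\mu(x,v)/\partial x$, as claimed. I do not expect a genuine obstacle here: the only arithmetic is the cancellation of the two normalizers together with matching the Gaussian prefactor $\tfrac{1}{\sigma\rho}=\tfrac{\sqrt{\sigma^2+\tau^2}}{\sigma\tau}$, while the agreement of the exponents is either read off from \eqref{rv_cond_v} via Bayes' rule or, if a self-contained check is preferred, verified by completing the square in $x$ in the expression $-\tfrac{(x-\mu)^2}{2\sigma^2}-\tfrac{(v-x)^2}{2\tau^2}+\tfrac{(v-\mu)^2}{2(\sigma^2+\tau^2)}$.
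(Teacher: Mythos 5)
Your proposal is correct and follows essentially the same route as the paper: differentiate $G_\mu(x,v)$, cancel the two truncation-dependent normalizers in $f_\mu(x)\,l(x,v)/h(v)$, and match the resulting Gaussian ratio to $\frac{1}{\sigma\rho}\phi\bigl(\frac{x-(\rho^2\mu+(1-\rho^2)v)}{\sigma\rho}\bigr)$. The only difference is cosmetic: you justify that last identity via Bayes' rule and \eqref{rv_cond_v}, whereas the paper verifies it directly by matching the constants and completing the square in the exponent — a check you also note as an alternative.
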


\begin{proof}
    By definition of $G_\mu(z,v)$ in \eqref{cdf_normal}, we have
    \begin{equation*}
        \frac{\partial G_\mu(z,v)}{\partial z} = \frac{1}{\sigma\rho} 
\phi\left(\frac{z-(\rho^2\mu+ (1-\rho^2)v)}{\sigma\rho}\right).
    \end{equation*}
    We claim that     
    \begin{equation*}
        \frac{1}{\sigma\rho} \phi\left(\frac{x-(\rho^2\mu+ 
(1-\rho^2)v)}{\sigma\rho}\right) = 
\frac{\frac{1}{\sigma}\phi\left(\frac{x-\mu}{\sigma}\right)\frac{1}{\tau}\phi
\left(\frac{v-x}{\tau}\right)}{\frac{1}{\sqrt{\sigma^2+\tau^2}} 
\phi\left(\frac{v-\mu}{\sqrt{\sigma^2+\tau^2}}\right)}.
    \end{equation*}
    To see this note that we can write the l.h.s. as $c_1 \exp(-d_1/2)$ and 
r.h.s. as $c_2 \exp(-d_2/2)$, where
    \begin{equation*}
        c_1 = \frac{1}{\sigma\rho \sqrt{2\pi}} = 
\frac{\sqrt{\sigma^2+\tau^2}}{\sigma \tau \sqrt{2\pi}} = c_2
    \end{equation*}
    and
    \begin{align*}
        d_1 &= \left( \frac{z-(\rho^2\mu+ (1-\rho^2)v)}{\sigma\rho} \right)^2 \\
        &= \frac{z^2 - 2\rho^2\mu z - 2(1-\rho^2)v z + \rho^4\mu^2 + 
(1-\rho^2)^2 v^2 + 2\rho^2(1-\rho^2)\mu v}{\sigma^2\rho^2} \\
        &= \frac{\rho^2(z-\mu)^2 + (1-\rho^2)(v-z)^2 - 
\rho^2(1-\rho^2)(v-\mu)^2 }{\sigma^2\rho^2} \\
        &= \left(\frac{z-\mu}{\sigma}\right)^2 + \left( \frac{v-z}{\tau} 
\right)^2 - \left(\frac{v-\mu}{\sqrt{\sigma^2+\tau^2}}\right)^2 = d_2.
    \end{align*}
    Hence the claimed equation is true. Observe that the r.h.s. of that 
equation can be written as
    \begin{equation*}
        \Scale[1.05]{  
          \frac{1}{\sigma}\phi\left(\frac{z-\mu}{\sigma}\right) 
\frac{\sum_{i=1}^k \Phi\left(\frac{b_i -z}{\tau}\right) - \Phi\left(\frac{a_i 
-z}{\tau}\right)}{\sum_{i=1}^k \Phi\left(\frac{b_i - 
\mu}{\sqrt{\sigma^2+\tau^2}}\right) - \Phi\left(\frac{a_i - 
\mu}{\sqrt{\sigma^2+\tau^2}}\right)} 
\frac{\frac{1}{\tau}\phi\left(\frac{v-z}{\tau}\right)}{\sum_{i=1}^k 
\Phi\left(\frac{b_i -z}{\tau}\right) - \Phi\left(\frac{a_i -z}{\tau}\right)} 
\frac{\sum_{i=1}^k \Phi\left(\frac{b_i - \mu}{\sqrt{\sigma^2+\tau^2}}\right) - 
\Phi\left(\frac{a_i - 
\mu}{\sqrt{\sigma^2+\tau^2}}\right)}{\frac{1}{\sqrt{\sigma^2+\tau^2}} 
\phi\left(\frac{v-\mu}{\sqrt{\sigma^2+\tau^2}}\right)}      
        }.
    \end{equation*}
In view of the definitions of $f_\mu(z)$, 
$l(z,v)$ and $h(v)$, it is easy 
to see that the previous expression is equal to $f_\mu(z) l(z,v) / h(v)$. Hence 
the derivative of $G_\mu(z,v)$ with respect to $z$ is of the claimed form.
\end{proof}

\end{appendices}

\bibliographystyle{apalike}
\bibliography{bibliography}

\begin{thebibliography}{}

\bibitem[Bachoc et~al., 2019]{bachoc2019}
Bachoc, F., Leeb, H., and Pötscher, B.~M. (2019).
\newblock Valid confidence intervals for post-model-selection predictors.
\newblock {\em Annals of Statistics}, 47:1475--1504.

\bibitem[Bachoc et~al., 2020]{bachoc2020}
Bachoc, F., Preinerstorfer, D., and Steinberger, L. (2020).
\newblock Uniformly valid confidence intervals post-model-selection.
\newblock {\em Annals of Statistics}, 48:440--463.

\bibitem[Berk et~al., 2013]{berk2013}
Berk, R., Brown, L., Buja, A., Zhang, K., and Zhao, L. (2013).
\newblock Valid post-selection inference.
\newblock {\em Annals of Statistics}, 41:802--837.

\bibitem[Fithian et~al., 2017]{fithian2017}
Fithian, W., Sun, D.~L., and Taylor, J. (2017).
\newblock Optimal inference after model selection.
\newblock {\em arXiv preprint arxiv:1410.2597}.

\bibitem[Frank and Friedman, 1993]{frank1993}
Frank, I.~E. and Friedman, J.~H. (1993).
\newblock A statistical view of some chemometrics regression tools.
\newblock {\em Technometrics}, 35:109--135.

\bibitem[Heller et~al., 2019]{heller2019}
Heller, R., Meir, A., and Chatterjee, N. (2019).
\newblock Post-selection estimation and testing following aggregate association tests.
\newblock {\em Journal of the Royal Statistical Society: Series B (Statistical Methodology)}, 81:547--573.

\bibitem[Jennison and Turnbull, 2000]{Jen00a}
Jennison, C. and Turnbull, B.~W. (2000).
\newblock {\em Group sequential methods with applications to clinical trials}.
\newblock Chapman \& Hall/CRC, Boca Raton, FL.

\bibitem[Kivaranovic and Leeb, 2021]{kivaranovic2020}
Kivaranovic, D. and Leeb, H. (2021).
\newblock On the length of post-model-selection confidence intervals conditional on polyhedral constraints.
\newblock {\em Journal of the American Statistical Association}, 534:845--857.

\bibitem[Kuchibhotla et~al., 2018a]{kuchibhotla2018a}
Kuchibhotla, A.~K., Brown, L.~D., Buja, A., George, E.~I., and Zhao, L. (2018a).
\newblock A model free perspective for linear regression: Uniform-in-model bounds for post selection inference.
\newblock {\em arXiv preprint arXiv:1802.05801}.

\bibitem[Kuchibhotla et~al., 2018b]{kuchibhotla2018b}
Kuchibhotla, A.~K., Brown, L.~D., Buja, A., George, E.~I., and Zhao, L. (2018b).
\newblock Valid post-selection inference in assumption-lean linear regression.
\newblock {\em arXiv preprint arXiv:1806.04119}.

\bibitem[Lee et~al., 2016]{lee2016}
Lee, J.~D., Sun, D.~L., Sun, Y., and Taylor, J.~E. (2016).
\newblock Exact post-selection inference, with application to the lasso.
\newblock {\em Annals of Statistics}, 44:907--927.

\bibitem[Leeb and P\"otscher, 2005]{Lee03a}
Leeb, H. and P\"otscher, B.~M. (2005).
\newblock Model selection and inference: Facts and fiction.
\newblock {\em Econometric Theory}, {\bf 21}:21--59.

\bibitem[Leeb and P\"otscher, 2006]{leeb2006}
Leeb, H. and P\"otscher, B.~M. (2006).
\newblock Can one estimate the conditional distribution of post-model-selection estimators?
\newblock {\em Annals of Statistics}, 34:2554--2591.

\bibitem[Leeb and P\"otscher, 2008]{leeb2008}
Leeb, H. and P\"otscher, B.~M. (2008).
\newblock Can one estimate the unconditional distribution of post-model-selection estimators?
\newblock {\em Econometric Theory}, 24:338--376.

\bibitem[Lehmann and Romano, 2006]{lehmann2006}
Lehmann, E.~L. and Romano, J.~P. (2006).
\newblock {\em Testing statistical hypotheses}.
\newblock Springer Science \& Business Media.

\bibitem[Markovic et~al., 2018]{markovic2018}
Markovic, J., Xia, L., and Taylor, J. (2018).
\newblock Unifying approach to selective inference with applications to cross-validation.
\newblock {\em arXiv preprint arXiv:1703.06559}.

\bibitem[Owen, 1980]{owen1980}
Owen, D.~B. (1980).
\newblock A table of normal integrals.
\newblock {\em Communications in Statistics - Simulation and Computation}, 9:389--419.

\bibitem[Panigrahi and Taylor, 2019]{panigrahi2019}
Panigrahi, S. and Taylor, J. (2019).
\newblock Approximate selective inference via maximum-likelihood.
\newblock {\em arXiv preprint arXiv:1902.07884}.

\bibitem[Panigrahi et~al., 2018]{panigrahi2018}
Panigrahi, S., Zhu, J., and Sabatti, C. (2018).
\newblock Selection-adjusted inference: an application to confidence intervals for $cis$-e{QTL} effect sizes.
\newblock {\em arXiv preprint arXiv:1801.08686}.

\bibitem[Reid et~al., 2017]{reid2017}
Reid, S., Taylor, J., and Tibshirani, R. (2017).
\newblock Post-selection point and interval estimation of signal sizes in gaussian samples.
\newblock {\em Canadian Journal of Statistics}, 45:128--148.

\bibitem[Reid et~al., 2018]{reid2018}
Reid, S., Taylor, J., and Tibshirani, R. (2018).
\newblock A general framework for estimation and inference from clusters of features.
\newblock {\em Journal of the American Statistical Association}, 113:280--293.

\bibitem[Rosenthal, 1979]{Ros79a}
Rosenthal, R. (1979).
\newblock The ``{F}ile {D}rawer {P}roblem'' and tolerance for null results.
\newblock {\em Psychol. Bull.}, {\bf 86}:638--641.

\bibitem[Taylor and Tibshirani, 2018]{taylor2018}
Taylor, J. and Tibshirani, R. (2018).
\newblock Post-selection inference for $l_1$-penalized likelihood models.
\newblock {\em Canadian Journal of Statistics}, 46:41--61.

\bibitem[Tian et~al., 2018]{tian2018b}
Tian, X., Loftus, J.~R., and Taylor, J.~E. (2018).
\newblock Selective inference with unknown variance via the square-root lasso.
\newblock {\em Biometrika}, 105:755--768.

\bibitem[Tian et~al., 2016]{tian2016}
Tian, X., Panograhi, S. a. M.~J., Bi, N., and Taylor, J. (2016).
\newblock Selective sampling after solving a convex problem.
\newblock {\em arXiv preprint arXiv:1609.05609}.

\bibitem[Tian and Taylor, 2017]{tian2017}
Tian, X. and Taylor, J. (2017).
\newblock Asymptotics of selective inference.
\newblock {\em Scandinavian Journal of Statistics}, 44:480--499.

\bibitem[Tian and Taylor, 2018]{tian2018}
Tian, X. and Taylor, J. (2018).
\newblock Selective inference with a randomized response.
\newblock {\em Annals of Statistics}, 46:679--710.

\bibitem[Tibshirani, 1996]{tibshirani1996}
Tibshirani, R. (1996).
\newblock Regression shrinkage and selection via the lasso.
\newblock {\em Journal of the Royal Statistical Society. Series B (Methodological)}, 58:267--288.

\bibitem[Tibshirani, 2013]{tibshirani2013}
Tibshirani, R.~J. (2013).
\newblock The lasso problem and uniqueness.
\newblock {\em Electronic Journal of Statistics}, 7:1456--1490.

\bibitem[Tibshirani et~al., 2016]{tibshirani2016}
Tibshirani, R.~J., Taylor, J., Lockhart, R., and Tibshirani, R. (2016).
\newblock Exact post-selection inference for sequential regression procedures.
\newblock {\em Journal of the American Statistical Association}, 111:600--620.

\bibitem[Zrnic and Jordan, 2020]{zrnic2020}
Zrnic, T. and Jordan, M.~I. (2020).
\newblock Post-selection inference via algorithmic stability.
\newblock {\em arXiv preprint arXiv:2011.09462}.

\end{thebibliography}

\end{document}